\journal{Theoretical Computer Science}
\protected\def\check@optarg#1{%
  \@ifnextchar\thmtformatoptarg\@secondoftwo{#1}%
}
\declaretheorem[name=Definition, shaded={rulecolor=black, rulewidth=0pt, bgcolor={rgb}{0.95,0.95,0.95}, margin=4pt}]{definition}
\declaretheorem[name=Definition, shaded={rulecolor=black, rulewidth=0pt, bgcolor={rgb}{0.95,0.95,0.95}, margin=4pt}]{hidendefinition}
\declaretheorem[name=Theorem, shaded={rulecolor=black, rulewidth=0pt, bgcolor={rgb}{0.95,0.95,0.95}, margin=4pt}]{theorem}
\declaretheorem[name=Algorithm, shaded={rulecolor=black, rulewidth=0pt, bgcolor={rgb}{0.95,0.95,0.95}, margin=4pt}]{algorithm}
\renewcommand*{\c}[1]{\protect\leavevmode\begingroup\color{blue}#1\endgroup}
\newcommand{\cn}[1]{#1}
\newcommand{\mycenter}[1]{\makebox[\textwidth][c]{#1}}
\newcommand{\ef}{e}
\newcommand{\MG}{M\hspace{-0.25ex}G}
\newcommand{\MS}{MS}
\DeclareRobustCommand{\cev}[1]{\mathpalette\do@cev{#1}}
\newcommand{\do@cev}[2]{\fix@cev{#1}{+}\reflectbox{$\m@th#1\vec{\reflectbox{$\fix@cev{#1}{-}\m@th#1#2\fix@cev{#1}{+}$}}$}\fix@cev{#1}{-}}
\newcommand{\fix@cev}[2]{\ifx#1\displaystyle\mkern#23mu\else\ifx#1\textstyle\mkern#23mu\else\ifx#1\scriptstyle\mkern#22mu\else\mkern#22mu\fi\fi\fi}
\newcommand{\threshold}{\tau}
\renewcommand{\pm}{{\hspace{-0.05ex}\prime}}
\newcommand{\dpm}{{\hspace{-0.05ex}\prime\hspace{-0.2ex}\prime}}
\newcommand{\V}{\bm{V}}
\newcommand{\E}{\bm{E}}
\newcommand{\T}{\bm{T}}
\newcommand{\Nin}{N_{\text{in}}}
\newcommand{\Nout}{N_{\text{out}}}
\newcommand{\vv}{{(v,v^\pm)}}
\newcommand{\vvt}{{(v,v^\pm\!,t)}}
\newcommand{\cvvt}{{(v,v^\pm\!\c{,t})}}
\newcommand{\VV}{{(V,V^\pm)}}
\newcommand{\pVV}{{V,V^\pm}}
\newcommand{\VVT}{{(V,V^\pm\!,T)}}
\newcommand{\pVVT}{{V,V^\pm\!,T}}
\newcommand{\cVVT}{{(V,V^\pm\!\c{,T})}}
\newcommand{\pcVVT}{{V,V^\pm\!\c{,T}}}
\newcommand{\bVV}{{(\V,\V)}}
\newcommand{\bVVT}{{(\V,\V,\T)}}
\newcommand{\XX}{{(X,X^\pm)}}
\newcommand{\XXX}{{(X,X^\pm\!,X^\dpm)}}
\newcommand{\cXXX}{{(X,X^\pm\!\c{,X^\dpm})}}
\newcommand{\Xin}{\cev{X}}
\newcommand{\Xout}{\vec{X}}
\newcommand{\YY}{{(Y,Y^\pm)}}
\newcommand{\pYY}{{Y,Y^\pm}}
\newcommand{\YYY}{{(Y,Y^\pm\!,Y^\dpm)}}
\newcommand{\cYYY}{{(Y,Y^\pm\!\c{,Y^\dpm})}}
\newcommand{\uu}{{(u,u^\pm)}}
\newcommand{\Utm}{\bar{U}}
\newcommand{\calI}{\mathcal{I}}
\newcommand{\calH}{\mathcal{H}}
\newcommand{\calP}{\mathcal{P}}
\newcommand{\calT}{\mathcal{T}}
\newcommand{\calE}{\mathcal{E}}
\newcommand{\calV}{\mathcal{V}}
\newcommand{\calX}{\mathcal{X}}
\newcommand{\calY}{\mathcal{Y}}
\newcommand{\calZ}{\mathcal{Z}}
\newcommand{\bVxVxT}{\V{\times}\V{\times}\T}
\newcommand{\cbVxVxT}{\V{\times}\V\c{{\times}\T}}
\newcommand{\VxVxT}{V{\times}V^\pm{\times}T}
\newcommand{\cVxVxT}{V{\times}V^\pm\c{{\times}T}}
\newcommand{\bVxV}{\V{\times}\V}
\newcommand{\VxV}{V{\times}V^\pm}
\newcommand{\calVxV}{\calV{\times}\calV}
\newcommand{\calVV}{\calV\calV}
\newcommand{\calVVT}{\calV\calV\calT}
\newcommand{\ccalVVT}{\calVV\c{\calT}}
\newcommand{\calVxVxT}{\calV{\times}\calV{\times}\calT}
\newcommand{\hatcalP}{\hat{\calP}}
\newcommand{\tilcalP}{\widetilde{\calP}}
\newcommand{\frakP}{\mathfrak{P}}
\newcommand{\frakR}{\mathfrak{R}}
\newcommand{\frakC}{\mathfrak{C}}
\newcommand{\frakH}{\mathfrak{H}}
\newcommand{\tilfrakP}{\widetilde{\frakP}}
\newcommand{\hatfrakP}{\hat{\frakP}}
\newcommand{\hatfrakR}{\hat{\frakR}}
\newcommand{\hatfrakC}{\hat{\frakC}}
\newcommand{\frakI}{\mathfrak{I}}
\newcommand{\obj}{q}
\newcommand{\pop}{\bm{X}}
\newcommand{\partpop}{X}
\newcommand{\partitionpop}{\mathcal{X}}
\newcommand{\DKL}{{D_{\mathrm{KL}}}}
\newcommand\given[1][]{\:#1\vert\:}
\newcommand\ggiven[1][]{\:#1\vert\vert\:}
\newcommand{\defeq}{\stackrel{\mathclap{\normalfont\footnotesize\mbox{def}}}{=}}
\DeclareMathOperator*{\argmin}{arg\,min}
\DeclareMathOperator{\loss}{loss}
\DeclareMathOperator{\suma}{sum}
\DeclareMathOperator{\info}{info}
\newcommand{\lossf}[3]{\loss(#2)}
\newcounter{partitionSize}
\newcounter{partSize}
\newcounter{partSizeMax}
\newcounter{partCount}
\newcounter{lastPart}
\newcounter{currentPart}
\newcounter{var1}
\newlength{\lena}
\newcommand{\optStr}{}
\newcommand{\txt}{}
\newif\iftikzbold
\newif\iftikzempty
\definecolor{opacCol}{rgb}{0.5,0.5,0.5}
\newcommand{\tikzopac}[1]{\textcolor{opacCol}{#1}}
\newcommand{\mychar}{}
\newcommand{\tikzpart}[3][-]{%
	\IfStrEq{#1}{-}
	{\renewcommand{\optStr}{#2}}
	{\renewcommand{\optStr}{#1}}
    
	\coordinate (sep) at (0.3,0);

	\StrRight{#2}{1}[\lastc]
	\IfStrEq{\lastc}{-}{\renewcommand{\txt}{#2}}{
		\IfStrEq{\lastc}{+}{\renewcommand{\txt}{#2}}{\renewcommand{\txt}{#2-}}
	}	
	
	\StrLen{\txt}[\len]
	\StrCount{\txt}{-}[\cutNb]
	\StrCount{\txt}{+}[\cutNbb]

	\setcounter{partitionSize}{\len-\cutNb-\cutNbb}
	\setcounter{partSizeMax}{\len+1}
	
	\setcounter{lastPart}{1}
	\setcounter{currentPart}{0}
	
	\setlength{\lena}{\arabic{partitionSize} cm * \real{0.575}}
	\node[minimum width=\lena,minimum height=0.9cm] (#3) at (pos) {};

	\iftikzbold \tikzset{tikzfill/.style={thick,draw=black,fill=white}}
	\else \tikzset{tikzfill/.style={thick,draw=white!55!black,fill=white}} \fi

	\StrLeft{#2}{1}[\firstc]
	\IfStrEq{\firstc}{-}{}{
		\forloop{partCount}{1}{\value{partCount} < \value{partSizeMax}}
		{
			\StrChar{\txt}{\value{partCount}}[\mychar]
			\IfStrEq{\mychar}{-}{
				\setcounter{partSize}{\value{currentPart}-\value{lastPart}+1}
				\begin{pgfonlayer}{background}
                  \setlength{\lena}{\arabic{partSize} cm * \real{0.575}}
                  \setcounter{var1}{\value{partitionSize} - \value{currentPart} - \value{lastPart} + 1}
                  \node[minimum width=\lena,minimum height=0.9cm,draw] (#3p\arabic{lastPart}) at ($(pos) - \value{var1}*(sep)$) {};
                  \setcounter{lastPart}{\value{currentPart}+1}
				\end{pgfonlayer}
              }{
				\IfStrEq{\mychar}{+}{
                  \setcounter{partSize}{\value{currentPart}-\value{lastPart}+1}
                  \begin{pgfonlayer}{background}
                    \setlength{\lena}{\arabic{partSize} cm * \real{0.575}}
                    \setcounter{var1}{\value{partitionSize} - \value{currentPart} - \value{lastPart} + 1}
                    \node[minimum width=\lena,minimum height=0.9cm,draw,densely dotted] (#3p\arabic{lastPart}) at ($(pos) - \value{var1}*(sep)$) {};
                    \setcounter{lastPart}{\value{currentPart}+1}
                  \end{pgfonlayer}
				}{
					\setcounter{currentPart}{\value{currentPart}+1}
					\setcounter{var1}{2*\value{currentPart} - \value{partitionSize} - 1}
					\iftikzempty\renewcommand{\mychar}{}\fi
					\iftikzbold \node[mypart] (#3i\arabic{currentPart}) at ($(pos) + \value{var1}*(sep)$) {$\mychar$\vphantom{\optStr}};
					\else \node[mypart] (#3i\arabic{currentPart}) at ($(pos) + \value{var1}*(sep)$) {$\tikzopac{\mychar}$\vphantom{\optStr}};\fi
				}
			}
		}
	}
}
\newcommand{\tikzpartbackup}[3][-]{%
	\IfStrEq{#1}{-}
	{\renewcommand{\optStr}{#2}}
	{\renewcommand{\optStr}{#1}}
    
	\coordinate (sep) at (0.3,0);

	\StrRight{#2}{1}[\lastc]
	\IfStrEq{\lastc}{-}{\renewcommand{\txt}{#2}}{
		\IfStrEq{\lastc}{+}{\renewcommand{\txt}{#2}}{\renewcommand{\txt}{#2-}}
	}	
	
	\StrLen{\txt}[\len]
	\StrCount{\txt}{-}[\cutNb]
	\StrCount{\txt}{+}[\cutNbb]

	\setcounter{partitionSize}{\len-\cutNb-\cutNbb}
	\setcounter{partSizeMax}{\len+1}
	
	\setcounter{lastPart}{1}
	\setcounter{currentPart}{0}
	
	\setlength{\lena}{\arabic{partitionSize} cm * \real{0.4}}
	\node[minimum width=\lena,minimum height=1cm] (#3) at (pos) {};

	\iftikzbold \tikzset{tikzfill/.style={thick,draw=black,fill=white}}
	\else \tikzset{tikzfill/.style={thick,draw=white!55!black,fill=white}} \fi

	\StrLeft{#2}{1}[\firstc]
	\IfStrEq{\firstc}{-}{}{
		\forloop{partCount}{1}{\value{partCount} < \value{partSizeMax}}
		{
			\StrChar{\txt}{\value{partCount}}[\mychar]
			\IfStrEq{\mychar}{-}{
				\setcounter{partSize}{\value{currentPart}-\value{lastPart}+1}
				\begin{pgfonlayer}{background}
					\ifnum\value{partSize}>1
						\fill[tikzfill] \convexpath{#3i\arabic{lastPart},#3i\arabic{currentPart}}{0.4cm};
						\setlength{\lena}{\arabic{partSize} cm * \real{0.4}}
						\setcounter{var1}{\value{partitionSize} - \value{currentPart} - \value{lastPart} + 1}
						\node[minimum width=\lena,minimum height=1cm] (#3p\arabic{lastPart}) at ($(pos) - \value{var1}*(sep)$) {};
					\else
						\fill[tikzfill] (#3i\arabic{lastPart}) circle (0.4cm);
					\fi
					\setcounter{lastPart}{\value{currentPart}+1}
				\end{pgfonlayer}
			}{
				\IfStrEq{\mychar}{+}{
					\setcounter{partSize}{\value{currentPart}-\value{lastPart}+1}
					\begin{pgfonlayer}{background}
						\ifnum\value{partSize}>1
							\fill[tikzfill,densely dotted] \convexpath{#3i\arabic{lastPart},#3i\arabic{currentPart}}{0.4cm};
							\setlength{\lena}{\arabic{partSize} cm * \real{0.4}}
							\setcounter{var1}{\value{partitionSize} - \value{currentPart} - \value{lastPart} + 1}
							\node[minimum width=\lena,minimum height=1cm] (#3p\arabic{lastPart}) at ($(pos) - \value{var1}*(sep)$) {};
						\else
							\fill[tikzfill,densely dotted] (#3i\arabic{lastPart}) circle (0.4cm);
						\fi
							\setcounter{lastPart}{\value{currentPart}+1}
					\end{pgfonlayer}
				}{
					\setcounter{currentPart}{\value{currentPart}+1}
					\setcounter{var1}{2*\value{currentPart} - \value{partitionSize} - 1}
					\iftikzempty\renewcommand{\mychar}{}\fi
					\iftikzbold \node[mypart] (#3i\arabic{currentPart}) at ($(pos) + \value{var1}*(sep)$) {$\mychar$\vphantom{\optStr}};
					\else \node[mypart] (#3i\arabic{currentPart}) at ($(pos) + \value{var1}*(sep)$) {$\tikzopac{\mychar}$\vphantom{\optStr}};\fi
				}
			}
		}
	}
}
\newcommand{\tikzringpart}[3][-]{%
	\IfStrEq{#1}{-}
	{\renewcommand{\optStr}{#2}}
	{\renewcommand{\optStr}{#1}}
		
	\coordinate (sep) at (0.3,0);

	\StrRight{#2}{1}[\lastc]
	\IfStrEq{\lastc}{-}{\renewcommand{\txt}{#2}}{
		\IfStrEq{\lastc}{+}{\renewcommand{\txt}{#2}}{\renewcommand{\txt}{#2-}}
	}	
	
	\StrLen{\txt}[\len]
	\StrCount{\txt}{-}[\cutNb]
	\StrCount{\txt}{+}[\cutNbb]

	\setcounter{partitionSize}{\len-\cutNb-\cutNbb}
	\setcounter{partSizeMax}{\len+1}
	
	\setcounter{lastPart}{1}
	\setcounter{currentPart}{0}
	
	\setlength{\lena}{\arabic{partitionSize} cm * \real{0.5}}
	\node[minimum width=\lena,minimum height=0.3cm] (#3) at (pos) {};

	\iftikzbold \tikzset{tikzfill/.style={thick,draw=black,fill=white}}
	\else \tikzset{tikzfill/.style={thick,draw=white!55!black,fill=white}} \fi

	\StrLeft{#2}{1}[\firstc]
	\IfStrEq{\firstc}{-}{}{
		\forloop{partCount}{1}{\value{partCount} < \value{partSizeMax}}
		{
			\StrChar{\txt}{\value{partCount}}[\mychar]
			\IfStrEq{\mychar}{+}{}{\IfStrEq{\mychar}{-}{}{\setcounter{currentPart}{\value{currentPart}+1}}}
			
			\coordinate (r) at (0,0);
			\ifnum\value{currentPart}=1\coordinate (r) at (0,1);\fi
			\ifnum\value{currentPart}=2\coordinate (r) at (0.951,0.309);\fi
			\ifnum\value{currentPart}=3\coordinate (r) at (0.588,-0.809);\fi
			\ifnum\value{currentPart}=4\coordinate (r) at (-0.588,-0.809);\fi
			\ifnum\value{currentPart}=5\coordinate (r) at (-0.951,0.309);\fi
			
			\IfStrEq{\mychar}{-}{
				\setcounter{partSize}{\value{currentPart}-\value{lastPart}+1}
				\begin{pgfonlayer}{background}
					\ifnum\value{partSize}>1
						\setlength{\lena}{\arabic{partSize} cm * \real{0.5}}
						\node[minimum width=\lena,minimum height=0.3cm] (#3p\arabic{lastPart}) at ($(pos) + 1.5*(r)$) {};
					\fi
					\setcounter{lastPart}{\value{currentPart}+1}
				\end{pgfonlayer}
			}{
				\IfStrEq{\mychar}{+}{
					\setcounter{partSize}{\value{currentPart}-\value{lastPart}+1}
					\begin{pgfonlayer}{background}
						\ifnum\value{partSize}>1
							\setlength{\lena}{\arabic{partSize} cm * \real{0.5}}
							\node[minimum width=\lena,minimum height=0.3cm] (#3p\arabic{lastPart}) at ($(pos) + 1.5*(r)$) {};
						\fi
							\setcounter{lastPart}{\value{currentPart}+1}
					\end{pgfonlayer}
				}{
					\iftikzempty\renewcommand{\mychar}{}\fi
					\iftikzbold \node[mypart] (#3i\arabic{currentPart}) at ($(pos) + 1.5*(r)$) {$\mychar$\vphantom{\optStr}};
					\else \node[mypart] (#3i\arabic{currentPart}) at ($(pos) + 1.5*(r)$) {$\tikzopac{\mychar}$\vphantom{\optStr}};\fi
				}
			}
		}
	}
}
\newcommand{\partie}[2][-]{%
	\StrRight{#2}{1}[\lastc]%
	\IfStrEq{\lastc}{-}{\renewcommand{\txt}{#2}}{%
		\IfStrEq{\lastc}{+}{\renewcommand{\txt}{#2}}{\renewcommand{\txt}{#2-}}%
	}%
	\StrLen{\txt}[\len]
	\StrCount{\txt}{-}[\cutNb]
	\StrCount{\txt}{+}[\cutNbb]
	\setcounter{currentPart}{\len-\cutNb-\cutNbb}
	\setlength{\unitlength}{0.43cm}%
	\ifnum\value{currentPart}=2\hspace{-0.5cm}\fi%
	\ifnum\value{currentPart}=4\hspace{-0.6cm}\fi%
	\begin{picture}(\value{currentPart},0)
	\put(0,0){\begin{tikzpicture}[baseline=(nodei1.base),scale=0.3]
		\coordinate (pos) at (0,0);
		\tikzpart{#2}{node}		
	\end{tikzpicture}%
	}\end{picture}%
}
\tikzset{
  transform shape,
  cell/.style = {rectangle, draw=black, minimum size=1cm, line width=0.4mm, font=\normalsize, inner sep=0},
  grid/.style = {rectangle, draw=black, fill=white, minimum size=1cm, line width=0.1mm, dotted, font=\normalsize},
  agg/.style = {draw=black!25!red, line width=0.8mm, font=\normalsize},
  sagg/.style = {draw=black, line width=0.4mm, font=\normalsize},
  cross/.style = {cross out, draw=black, minimum size=0.6cm, inner sep=0pt, outer sep=0pt},
  labels/.style = {font=\bf\large},
  bigtext/.style = {align=center, anchor=south, font=\bf\Large},
  myobject/.style = {draw=black, fill=white, font=\large},
  mynode/.style = {circle, myobject, minimum size=10mm},
  myedge/.style = {OliveGreen},
  myagg/.style = {black!25!red, line width=0.8mm},
  mypart/.style = {minimum width=1cm, minimum height=0.9cm, font=\bf\LARGE},
  value/.style = {},
  pattern0/.style = {},
  pattern1/.style = {pattern=north west lines, pattern color=blue},
  pattern2/.style = {pattern=horizontal lines, pattern color=red},
  pattern3/.style = {pattern=north east lines, pattern color=green!50!black},
  pattern4/.style = {pattern=vertical lines, pattern color=black!75!white},
  pattern5/.style = {pattern=dots, pattern color=black},
  toInLink/.style = {},
  toOutLink/.style = {->, >=latex},
  toPart/.style = {dashed, ->, >=latex},
  toPartition/.style = {thick, ->, >=latex},
  toIncluded/.style = {->, thick},
}
\newcommand{\tikzPrintSimpleCoord}{
  \coordinate (c1) at (0,2);
  \coordinate (c2) at (-1.75,0);
  \coordinate (c3) at (0,-2);
  \coordinate (c4) at (4,1.75);
  \coordinate (c5) at (4,-1.75);  
}
\newcommand{\tikzPrintLabels}{
  \tikzPrintHLabels
  \tikzPrintVLabels
}
\newcommand{\tikzPrintVLabels}{
  \node [labels] at (-1,4) {$v_1$};
  \node [labels] at (-1,3) {$v_2$};
  \node [labels] at (-1,2) {$v_3$};
  \node [labels] at (-1,1) {$v_4$};
  \node [labels] at (-1,0) {$v_5$};
}
\newcommand{\tikzPrintHLabels}{
  \node [labels] at (0,5) {$v_1$};
  \node [labels] at (1,5) {$v_2$};
  \node [labels] at (2,5) {$v_3$};
  \node [labels] at (3,5) {$v_4$};
  \node [labels] at (4,5) {$v_5$};
}
\newcommand{\tikzPrintEdge}[3]{
  \MULTIPLY{#3}{0.5}{\temp}
  \draw [myedge, ->, line width=\temp] (#1) to [bend left=10] (#2);  
}
\newcommand{\tikzPrintLoop}[3]{
  \MULTIPLY{#2}{0.333}{\temp}
  \ifthenelse{\equal{#3}{1}} { \draw [myedge, <->, line width=\temp] (#1) to [out=150,in=210,looseness=5] (#1); }
  {\ifthenelse{\equal{#3}{2}} { \draw [myedge, <->, line width=\temp] (#1) to [out=-60,in=-120,looseness=5] (#1); }
    {\ifthenelse{\equal{#3}{3}} { \draw [myedge, <->, line width=\temp] (#1) to [out=-30,in=30,looseness=5] (#1); }
      {\ifthenelse{\equal{#3}{4}} { \draw [myedge, <->, line width=\temp] (#1) to [out=60,in=120,looseness=5] (#1); }}}}
}
\newcommand{\tikzPrintSEdge}[3]{
  \MULTIPLY{#3}{1}{\temp}
  \draw [myedge, ->, line width=\temp] (#1) to [bend left=10] (#2);  
}
\newcommand{\tikzPrintSLoop}[3]{
  \MULTIPLY{#2}{1}{\temp}
  \ifthenelse{\equal{#3}{1}} { \draw [myedge, <->, line width=\temp] (#1) to [out=150,in=210,looseness=5] (#1); }
  {\ifthenelse{\equal{#3}{2}} { \draw [myedge, <->, line width=\temp] (#1) to [out=-60,in=-120,looseness=5] (#1); }
    {\ifthenelse{\equal{#3}{3}} { \draw [myedge, <->, line width=\temp] (#1) to [out=-30,in=30,looseness=5] (#1); }
      {\ifthenelse{\equal{#3}{4}} { \draw [myedge, <->, line width=\temp] (#1) to [out=60,in=120,looseness=5] (#1); }}}}
}
\newcommand{\tikzPrintCell}[4]{
  \ifthenelse{\equal{#4}{NA}}%
  {%
    \node[cell, fill=white] at (#1,#2) {#3};
    \node[cross] at (#1,#2) {};
  }%
  {%
    \MULTIPLY{#4}{1000}{\temp}
    \ROUND[0]{\temp}{\temp}
    \node[cell, fill=OliveGreen!\temp!white] at (#1,#2) {#3};
  }%
}
\newcommand{\tikzPrintWhiteCell}[2]{
    \node[cell, fill=white, text=black] at (#1,#2) {};
}
\newcommand{\tikzPrintBlackCell}[2]{
    \node[cell, fill=OliveGreen, text=white] at (#1,#2) {};
}
\newcommand{\tikzPrintRectangle}[6]{
  \ifthenelse{\equal{#6}{NA}}%
  {%
    \begin{scope}[shift={(-0.5,-0.5)}]
      \draw [cell, fill=white] (#1,#2) rectangle (#3,#4);
    \end{scope}        
  }%
  {%
    \MULTIPLY{#6}{1000}{\temp}
    \ROUND[0]{\temp}{\temp}
    \begin{scope}[shift={(-0.5,-0.5)}]
      \draw [cell, fill=OliveGreen!\temp!white] (#1,#2) rectangle (#3,#4);
      \node [value] at ($0.5*(#1,#2)+0.5*(#3,#4)$) {#5};
    \end{scope}        
  }%
}
\newcommand{\tikzPrintWhiteRectangle}[4]{
  \begin{scope}[shift={(-0.5,-0.5)}]
    \draw [cell, fill=white] (#1,#2) rectangle (#3,#4);
    \node [value, text=black] at ($0.5*(#1,#2)+0.5*(#3,#4)$) {};
  \end{scope}        
}
\newcommand{\tikzPrintBlackRectangle}[4]{
  \begin{scope}[shift={(-0.5,-0.5)}]
    \draw [cell, fill=OliveGreen, text=white] (#1,#2) rectangle (#3,#4);
    \node [value, text=white] at ($0.5*(#1,#2)+0.5*(#3,#4)$) {};
  \end{scope}        
}
\newcommand{\tikzPrintAgg}[5][pattern0]{
  \begin{scope}[shift={(-0.5,-0.5)}]
    \draw [agg, #1] (#2,#3) rectangle (#4,#5);
  \end{scope}
}
\newcommand{\tikzPrintSAgg}[4]{
  \begin{scope}[shift={(-0.5,-0.5)}]
    \draw [sagg] (#1,#2) rectangle (#3,#4);
  \end{scope}
}
\newcommand{\tikzPrintTAgg}[4]{
  \begin{scope}[shift={(-0.5,-0.5)}]
    \draw [agg, fill=white] (#1,#2) rectangle (#3,#4);
  \end{scope}
}
\newcommand{\tikzPrintGrid}{
  \node[grid] at (0,4) {};
  \node[grid] at (1,4) {};
  \node[grid] at (2,4) {};
  \node[grid] at (3,4) {};
  \node[grid] at (4,4) {};

  \node[grid] at (0,3) {};
  \node[grid] at (1,3) {};
  \node[grid] at (2,3) {};
  \node[grid] at (3,3) {};
  \node[grid] at (4,3) {};

  \node[grid] at (0,2) {};
  \node[grid] at (1,2) {};
  \node[grid] at (2,2) {};
  \node[grid] at (3,2) {};
  \node[grid] at (4,2) {};

  \node[grid] at (0,1) {};
  \node[grid] at (1,1) {};
  \node[grid] at (2,1) {};
  \node[grid] at (3,1) {};
  \node[grid] at (4,1) {};

  \node[grid] at (0,0) {};
  \node[grid] at (1,0) {};
  \node[grid] at (2,0) {};
  \node[grid] at (3,0) {};
  \node[grid] at (4,0) {};
}
\newcommand{\convexpath}[2]{
  [   
  create hullnodes/.code={
    \global\edef\namelist{#1}
    \foreach [count=\counter] \nodename in \namelist {
      \global\edef\numberofnodes{\counter}
      \node at (\nodename) [draw=none,name=hullnode\counter] {};
    }
    \node at (hullnode\numberofnodes) [name=hullnode0,draw=none] {};
    \pgfmathtruncatemacro\lastnumber{\numberofnodes+1}
    \node at (hullnode1) [name=hullnode\lastnumber,draw=none] {};
  },
  create hullnodes
  ]
  ($(hullnode1)!#2!-90:(hullnode0)$)
  \foreach [
  evaluate=\currentnode as \previousnode using \currentnode-1,
  evaluate=\currentnode as \nextnode using \currentnode+1
  ] \currentnode in {1,...,\numberofnodes} {
    -- ($(hullnode\currentnode)!#2!-90:(hullnode\previousnode)$)
    let \p1 = ($(hullnode\currentnode)!#2!-90:(hullnode\previousnode) - (hullnode\currentnode)$),
    \n1 = {atan2(\y1,\x1)},
    \p2 = ($(hullnode\currentnode)!#2!90:(hullnode\nextnode) - (hullnode\currentnode)$),
    \n2 = {atan2(\y2,\x2)},
    \n{delta} = {-Mod(\n1-\n2,360)}
    in 
    {arc [start angle=\n1, delta angle=\n{delta}, radius=#2]}
  }
  -- cycle
}
\newcommand{\drawblackcell}[7]{
  \begin{scope}[shift={(#1,#2,#3)}]
    \draw [cell, fill=OliveGreen!#7!white] (#4,#5,#6) -- (0,#5,#6) -- (0,0,#6) -- (#4,0,#6) -- cycle;
    \draw [cell, fill=OliveGreen!#7!white] (#4,#5,#6) -- (0,#5,#6) -- (0,#5,0) -- (#4,#5,0) -- cycle;
    \draw [cell, fill=OliveGreen!#7!white] (#4,#5,#6) -- (#4,#5,0) -- (#4,0,0) -- (#4,0,#6) -- cycle;
  \end{scope}
}
\begin{document}

\begin{frontmatter}

\title{An Information-theoretic Framework\\ for the Lossy Compression of Link Streams}

\author{Robin Lamarche-Perrin}
\ead{Robin.Lamarche-Perrin@lip6.fr}
\ead[url]{https://www-complexnetworks.lip6.fr/~lamarche/}

\address{Centre national de la recherche scientifique\\ Institut des syst\`emes complexes de Paris \^Ile-de-France\\ Laboratoire d'informatique de Paris 6} 

\begin{abstract}
  Graph compression is a data analysis technique that consists in the replacement of parts of a graph by more general structural patterns in order to reduce its description length.
  It notably provides interesting exploration tools for the study of real, large-scale, and complex graphs which cannot be grasped at first glance.
  This article proposes a framework for the compression of temporal graphs, that is for the compression of graphs that evolve with time.
  This framework first builds on a simple and limited scheme, exploiting structural equivalence for the lossless compression of static graphs, then generalises it to the lossy compression of link streams, a recent formalism for the study of temporal graphs.
  Such generalisation relies on the natural extension of (bidimensional) relational data by the addition of a third temporal dimension.
  Moreover, we introduce an information-theoretic measure to quantify and to control the information that is lost during compression, as well as an algebraic characterisation of the space of possible compression patterns to enhance the expressiveness of the initial compression scheme.
  These contributions lead to the definition of a combinatorial optimisation problem, that is the Lossy Multistream Compression Problem, for which we provide an exact algorithm.
\end{abstract}

\begin{keyword}
Graph compression \sep link streams \sep structural equivalence \sep information theory \sep combinatorial optimisation.
\end{keyword}

\end{frontmatter}

\renewcommand{\listtheoremname}{Table of Definitions}
\listoftheorems[ignoreall,show={definition}]

\clearpage

\tableofcontents

\clearpage

\section*{Table of Notations}
\addcontentsline{toc}{section}{Table of Notations}

\makeatletter
\def\z#14#2!!{\def\TY@classz{#17#2}}
\expandafter\z\TY@classz!!
\makeatother

{\normalsize
  \begin{tabulary}{\textwidth}{lL}
    \makebox[1.7cm][l]{$v \in \V$} $/ \quad t \in \T$ & a vertex / a time instance\\[1ex]
    \makebox[1.7cm][l]{$V \in \calP(\V)$} $/ \quad T \in \calP(\T)$ & a vertex subset / a time subset\\[1ex]
    \makebox[1.7cm][l]{$\calV \in \frakP(\V)$} $/ \quad \calT \in \frakP(\T)$ & a vertex partition / a time partition\\[1ex]
    \makebox[1.7cm][l]{$\calV(v) \in \calV$} $/ \quad \calT(t) \in \calT$ & the vertex subset in $\calV$ that contains $v$ / the time instance in $\calT$ that contains $t$\\[2em]
    


    $\vvt \in \bVxVxT$ & a multiedge\\[1ex]
    $\VxVxT \in \calP(\bVxVxT)$ & a Cartesian multiedge subset\\[1ex]
    $\calVxVxT \in \frakP(\bVxVxT)$ & a grid multiedge partition\\[1ex]
    $\calVVT \in \frakP^\times(\bVxVxT)$ & a Cartesian multiedge partition\\[1ex]
    $\calVVT\vvt \in \calVVT$ & the multiedge subset in partition $\calVVT$ that contains $\vvt$\\[2em]

    $e : \bVxVxT \rightarrow \mathbb{N}$ & the edge function of a multistream\\[1ex]
    $e : \calP(\V){\times}\calP(\V){\times}\calP(\T) \rightarrow \mathbb{N}$ & the additive extension of the edge function\\[1ex]
    $e\bVVT$ & the total number of edges\\[2em]
  \end{tabulary}

  \begin{tabulary}{\textwidth}{lL}
    $\XXX \in \bVxVxT$ & the observed variable associated with the empirical distribution of edges in a multistream\\[1ex]
    $\calVxVxT\XXX \in \calVxVxT$ & the compressed variable resulting from the compression of the observed variable $\XXX$ by a given multiedge partition $\calVxVxT$\\[1ex]
    $\YYY \in \bVxVxT$ & the external variable which distribution is used to decompress the compressed variable $\calVxVxT\XXX$\\[1ex]
    $\calVxVxT_\YYY\XXX \in \bVxVxT$ & the decompressed variable obtained from the decompression of the compressed variable $\calVxVxT\XXX$ according to the external variable $\YYY$\\[1ex]
    $\lossf{\XXX}{\calVxVxT}{\YYY}$ & the information loss induced from the compression of the observed variable $\XXX$ by a given multiedge partition $\calVxVxT$ and its decompression according to the external variable $\YYY$\\[2em]
  \end{tabulary}

  \begin{tabulary}{\textwidth}{lL}
    \makebox[1.7cm][l]{$V \in \hatcalP(\V)$} $/ \quad T \in \hatcalP(\T)$ & a feasible vertex subset / a feasible time subset\\[1ex]
    \makebox[1.7cm][l]{$\calH(\V)$} $/ \quad \calI(\T)$ & a vertex hierarchy / a set of time intervals\\[1ex]
    $\VxVxT \in \hatcalP(\bVxVxT)$ & a feasible Cartesian multiedge subset\\[1ex]
    $\calVVT \in \hatfrakP^\times(\bVxVxT)$ & a feasible Cartesian multiedge partition\\[2em]

    $\hatfrakR(\calVVT) \subset \hatfrakP(\bVxVxT)$ & the set of feasible multiedge partitions that refine $\calVVT$\\[1ex]
    $\hatfrakC(\calVVT) \subset \hatfrakP(\bVxVxT)$ & the set of feasible multiedge partitions that are covered by $\calVVT$
  \end{tabulary}

}


\clearpage

\section{Introduction}

\emph{Graph abstraction} is a data analysis technique aiming at the extraction of salient features from relational data to provide a simpler, and hence more useful representation of the graph under study.
Such a process generally relies on a controlled information reduction suppressing redundancies or irrelevant parts of the data~\cite{Zhou09}.
Abstraction techniques are hence crucial to the study of real, large-scale, and complex graphs which cannot be grasped at first glance.
First, they provide tools for an optimised storage and data treatment by reducing memory requirements and running times of analysis algorithms.
Second, and more importantly, they constitute valuable exploration tools for domain experts who are looking for preliminary macroscopic insights about their graphs' topology or, even better, a multiscale representation of their data.

Among abstraction techniques, \emph{graph compression}~\cite{Toivonen11,Serafino13}, also known as \emph{graph simplification} or \emph{graph summarisation}~\cite{Navlakha08,Zhang10,LeFevre10}, consists in replacing parts of the graph by more general structural patterns in order to reduce its description length.
For example, one ``can replace a dense cluster by a single node, so the overall structure of the network becomes clearer''~\cite{Zhou09},
or more generally replace any frequent subgraph pattern (\emph{e.g.}, cliques, stars, loops) by a label of that pattern.
Such techniques hence range from those building on collections of domain-specific patterns, such as \emph{graph rewriting} techniques in which patterns of interest are specified according to expert knowledge~\cite{Pinaud12}, to those relying on more generic patterns, such as \emph{power graph} techniques in which any group of vertices with identical interaction profiles is a candidate for summarisation~\cite{Dwyer13,Ahnert14}.
Because they provide more general approaches to graph analysis, we will focus on the latter.

In this article, we are more particularly interested in the compression of \emph{temporal graphs}, that is the compression of graphs that evolve with time.
Many research studies are indeed interested in the dynamics of relations, as for example the evolution of friendship relations in social sciences, or even in the dynamics of interaction events~\cite{Holme13}, as for example contact or communication networks, such as mail exchanges, financial transactions, physical meetings, and so on.
Having to deal with an additional dimension -- that is the temporal dimension -- challenges compression techniques that have initially been developed for the study of static graphs.
A traditional approach to generalise such techniques preliminary consists in the construction of a sequence of static graphs, by slicing the temporal dimension into distinct periods of interest, then in independently applying classical compression schemes to each graph of this sequence.
However, such a process introduces an asymmetry in the way structural and temporal information is handled, the latter being compressed prior to -- and independently from -- the former.

To this extent, recent work on the \emph{link stream} formalism proposes to deal with time as a simple addition to the graph's structural dimensions~\cite{Viard16, Latapy17}.
Considering temporal graphs and interaction networks as genuine tridimensional data, the arbitrary separation of structure and time is therefore prohibited.
Following this line of thinking, the compression scheme we present in this article aims at the natural generalisation of the bidimensional compression of static graphs to the tridimensional compression of link streams, thus participating in the development of this emerging framework.
Similar generalisation objectives have been addressed in previous work on graph compression, as for example the application of bidimensional \emph{block models} to multidimensional matrices~\cite{Borgatti92} or the application of \emph{biclustering} to triplets of variables~\cite{Narmadha16}, which has then been exploited for the statistical analysis of temporal graphs~\cite{Guigoures12}.
The particular interest of such approaches also consists in the fact that they provide a unified compression scheme in which structural and temporal information is simultaneously taken into account.\\

In order to present our compression framework, this article starts canonical and specific, then increase in generality and in sophistication.
Section~\ref{sec:GCP} introduces the \emph{Graph Compression Problem} (GCP), a first compression scheme that relies on a most classical combinatorial problem in graph theory:
Finding classes of structurally-equivalent vertices~\cite{Lorrain71} to summarise the adjacency-list and the adjacency-matrix representations of a given graph.
This approach to graph compression is canonical in the sense that it only builds on the primary, first-order information that is contained in relational data, that is the information encoded in vertex adjacency.
It is also specific in the sense that it only applies to simple graphs (that is graphs for which at most one edge is allowed between two vertices) with no temporal dimension (that is static graphs).
Moreover, this first scheme is lossless (it does not allow for any information loss during compression) and its solution space is both strongly constrained (only vertex partitions are considered, whereas edge partitions would allow for much more compression choices) and weakly expressive (any vertex subset is feasible, whereas interesting structural properties preliminarily defined by the expert domain might need to be preserved during compression).

In order to address such limitations, Section~\ref{sec:generalisation} consists in a step-by-step generalisation of the GCP to make it suitable for the lossy compression of temporal graphs.
First, we show how to deal with the compression of \emph{multigraphs} (that is graphs for which multiple edges are allowed between two vertices) by generalising the notion of structural equivalence to the case of multiple edges (\ref{ssec:unweighted}).
Second, we allow for a \emph{lossy} compression scheme by formalising a proper measure of information loss building on the entropy of the adjacency information contained in the compressed graph relative to the one contained in the initial graph (\ref{ssec:lossy}).
Third, we allow for a less constrained compression scheme by generalising from vertex partitions to edge partitions (\ref{ssec:powergraph}).
Fourth, we allow for a more expressive scheme by driving compression according to a predefined set of feasible aggregates (\ref{ssec:constrained}).
Fifth and last, we generalise the resulting framework to the compression of temporal multigraphs, that is what we later call \emph{multistreams}, by adding a temporal dimension to the compression scheme (\ref{ssec:dynamic}).
These five contributions finally define a general and flexible scheme for link stream compression, that we call the \emph{Multistream Compression Problem} (MSCP).

Section~\ref{sec:MSCP} then presents a combinatorial optimisation algorithm to solve the MSCP.
It relies on the reduction of the problem to the better-known \emph{Set Partitioning Problem} (SPP) arising as soon as one wants to organise a set of objects into covering and pairwise disjoint subsets such that an additive objective is minimised~\cite{Balas76}.
Building on a generic algorithmic framework proposed in previous work to solve special versions of the SPP~\cite{RLP_ICTAI14, RLP_CLUSTER14}, this article derives an algorithm to the particular case of the MSCP.
This algorithm relies on the acknowledgement of a principle of optimality, showing that the problem's solution space has an optimal substructure allowing for the recursive combination of locally-optimal solutions.
Applying classical methods of dynamic programming and providing a proper data structure for the MSCP, we finally derive an exact algorithm which is exponential in the worst case, but polynomial when the set of feasible vertex aggregates is assumed to have some particular structure (\emph{e.g.}, hierarchies of vertices and sets of intervals).

Section~\ref{sec:conclusion} discusses the outcomes of this new compression scheme and provides some research perspectives, notably to propose in the future tractable approximation algorithms for the lossy compression of large-scale temporal graphs.


\section{Starting Point: The Lossless Graph Compression Problem}
\label{sec:GCP}

The starting point to build our compression scheme is a well-known combinatorial problem:
Find the quotient set of the structural equivalence relation applying to the vertices of a graph.
As the resulting equivalence classes form a partition of the vertex set by grouping together vertices with an identical (first-order) structure -- that is with identical neighbourhoods -- one can exploit such classes to compress the graph representation, as illustrated in Figure~\ref{fig:simple_example}.
Structural equivalence can thus be used for the lossless compression of static graphs, and we later list the improvements one needs in order to generalise this first simple scheme to the lossy compression of link streams.

\subsection{Preliminary Notations}

Given a set of vertices $\V = \{v_1,\ldots,v_n\}$, we mark:
\begin{itemize}
\item $\calP(\V)$ the set of all vertex subsets: $\calP(\V) = \{V \subseteq \V\}$;
\item $\frakP(\V)$ the set of all vertex partitions:
  $$\frakP(\V) \; = \; \{\{V_1,\ldots,V_m\} \subseteq \calP(\V) \; : \: \cup_i V_i = \V \; \wedge \; \forall i \neq j, \; V_i \cap V_j = \emptyset\} \text{;}$$
  \item Given a vertex $v \in \V$ and a vertex partition $\calV \in \frakP(\V)$, we mark $\calV(v)$ the unique vertex subset in $\calV$ that contains $v$.
\end{itemize}

More generally, this article uses a consistent system of capitalization and typefaces to properly formalise the compression problem and its solution space:
\begin{itemize}
	\item {Vertices} are designated by {lowercase} letters: $v$, $v^\pm$, $u$, $u^\pm$;
	\item Vertex sets and vertex subsets by {uppercase} letters: $\V$, $V$, $V^\pm$;
	\item {Vertex partitions} and sets of vertex subsets by {calligraphic} letters: $\calV$, $\calV^\pm$, $\calP(\V)$, $\calH(\V)$, $\calI(\V)$;
	\item Sets of vertex partitions by {Gothic} letters: $\frakP(\V)$, $\frakH(\V)$, $\frakI(\V)$. 
\end{itemize}

\begin{figure}
  \centering
  \mycenter{\begin{tikzpicture}[scale=0.65]

  \node [align=center] (graph) at (0,-0.25) {
    \begin{tikzpicture}
      \tikzPrintSimpleCoord
      
      \node [mynode] (v1) at (c1) {$v_1$};
      \node [mynode] (v2) at (c2) {$v_2$};
      \node [mynode] (v3) at (c3) {$v_3$};
      \node [mynode] (v4) at (c4) {$v_4$};
      \node [mynode] (v5) at (c5) {$v_5$};

      \tikzPrintSEdge{v1}{v4}{2}
      \tikzPrintSEdge{v2}{v4}{2}
      \tikzPrintSEdge{v3}{v4}{2}
      \tikzPrintSEdge{v1}{v5}{2}
      \tikzPrintSEdge{v2}{v5}{2}
      \tikzPrintSEdge{v3}{v5}{2}

      \tikzPrintSEdge{v4}{v1}{2}
      \tikzPrintSEdge{v4}{v2}{2}
      \tikzPrintSEdge{v4}{v3}{2}
      \tikzPrintSEdge{v5}{v1}{2}
      \tikzPrintSEdge{v5}{v2}{2}
      \tikzPrintSEdge{v5}{v3}{2}

      \tikzPrintSEdge{v4}{v5}{2}
      \tikzPrintSEdge{v5}{v4}{2}
      \tikzPrintSLoop{v4}{2}{3}
      \tikzPrintSLoop{v5}{2}{3}
    \end{tikzpicture}
  };

  \node [align=center] (matrix) at (7,0) {
    \begin{tikzpicture}[scale=0.8]

      \tikzPrintLabels

      \tikzPrintWhiteCell{0}{4}
      \tikzPrintWhiteCell{1}{4}
      \tikzPrintWhiteCell{2}{4}
      \tikzPrintBlackCell{3}{4}
      \tikzPrintBlackCell{4}{4}

      \tikzPrintWhiteCell{0}{3}
      \tikzPrintWhiteCell{1}{3}
      \tikzPrintWhiteCell{2}{3}
      \tikzPrintBlackCell{3}{3}
      \tikzPrintBlackCell{4}{3}

      \tikzPrintWhiteCell{0}{2}
      \tikzPrintWhiteCell{1}{2}
      \tikzPrintWhiteCell{2}{2}
      \tikzPrintBlackCell{3}{2}
      \tikzPrintBlackCell{4}{2}

      \tikzPrintBlackCell{0}{1}
      \tikzPrintBlackCell{1}{1}
      \tikzPrintBlackCell{2}{1}
      \tikzPrintBlackCell{3}{1}
      \tikzPrintBlackCell{4}{1}

      \tikzPrintBlackCell{0}{0}
      \tikzPrintBlackCell{1}{0}
      \tikzPrintBlackCell{2}{0}
      \tikzPrintBlackCell{3}{0}
      \tikzPrintBlackCell{4}{0}
    \end{tikzpicture}
  };

  \node [align=center] (aggGraph) at (0,-6.25) {
    \begin{tikzpicture}
      \tikzPrintSimpleCoord

      \node (v1) at (c1) {};
      \node (v2) at (c2) {};
      \node (v3) at (c3) {};
      \node (v4) at (c4) {};
      \node (v5) at (c5) {};
      


      \node [mynode] (v123) at ($0.5*(v1)+0.5*(v3)-0.25*(v2)$) {$\{v_1, v_2, v_3\}$};
      \node [mynode] (v45) at ($0.5*(v4)+0.5*(v5)$) {$\{v_4, v_5\}$};

      \tikzPrintSEdge{v123}{v45}{2}
      \tikzPrintSEdge{v45}{v123}{2}
      \tikzPrintSLoop{v45}{2}{3}

    \end{tikzpicture}
  };

  \node [align=center] (aggMatrix) at (7,-6) {
    \begin{tikzpicture}[scale=0.8]
      
      \node [labels] at (1,5) {$\{v_1, v_2, v_3\}$};
      \node [labels] at (3.5,5) {$\{v_4, v_5\}$};

      \node [labels] at (-1,0.5) {\rotatebox{90}{$\{v_5, v_4\}$}};
      \node [labels] at (-1,3) {\rotatebox{90}{$\{v_3, v_2, v_1\}$}};

      \tikzPrintWhiteRectangle{0}{5}{3}{2}
      \tikzPrintBlackRectangle{3}{5}{5}{2}
      \tikzPrintBlackRectangle{0}{2}{3}{0}
      \tikzPrintBlackRectangle{3}{2}{5}{0}
    \end{tikzpicture}
  };

\end{tikzpicture}

  \caption{
    Lossless compression of a 5-vertex, 16-edge graph (above) into a 2-vertex, 3-edge graph (below).
    The \emph{adjacency-list} representation is given on the left and the \emph{adjacency-matrix} representation on the right.
  }
  \label{fig:simple_example}
\end{figure}
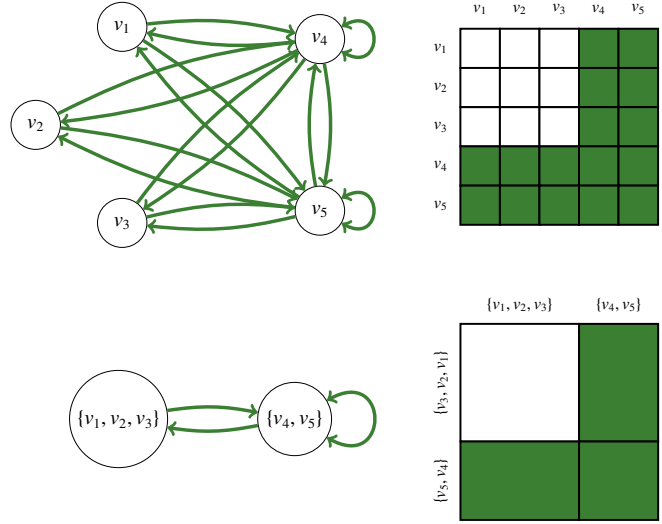

\subsection{The Lossless GCP}

To begin with, we consider a simple case: Directed static graphs, with possible self-loops on the vertices. 
\begin{definition}[Directed Graph]~\\
  \label{def:graph}
  A \emph{directed graph} $G = (\V,\E)$ is characterised by:
  \begin{itemize}
  \item A set of \emph{vertices} $\V$;
  \item A set of \emph{directed edges} $\E \subseteq \bVxV$.
  \end{itemize}
  For all vertex $v \in \V$, we respectively mark $\Nin(v) = \{v^\pm \in \V : (v^\pm,v) \in \E \}$ and $\Nout(v) = \{v^\pm \in \V : \vv \in \E \}$ the \emph{in-coming} and the \emph{out-going neighbourhoods} of $v$.
\end{definition}
The upper part of Figure~\ref{fig:simple_example} gives an example of directed graph made of $|\V| = 5$ vertices and $|\E| = 16$ edges.
It is represented in the form of \emph{adjacency lists} (on the left), where each edge is represented as an arrow going from a source vertex $v \in \V$ to a target vertex $v^\pm \in \V$, as well as in the form of an \emph{adjacency matrix} (on the right), where edges are represented within a binary matrix of size $|\V| \times |\V|$.\\

The combinatorial problem we now formalise builds on the classical relation of \emph{structural equivalence} applying to the vertex set of a graph~\cite{Lorrain71}.
\begin{definition}[Structural Equivalence]~\\
  \label{def:SE_graph}
  The \emph{structural equivalence relation} $\sim \; \subseteq \V^2$ is defined on directed graphs by the equality of neighbourhoods:
  Two vertices $\vv \in \V^2$ are \emph{structurally equivalent} if and only if they are connected to the same vertices.
  Formally:
  $$v \sim v^\pm \quad \Leftrightarrow \quad \Nin(v) = \Nin(v^\pm) \quad \text{and} \quad \Nout(v) = \Nout(v^\pm) \text{.}$$
  A vertex subset $V \in \calP(\V)$ is \emph{structurally consistent} if and only if all its vertices are structurally equivalent with each others, and
  a vertex partition $\calV \in \frakP(\V)$ is \emph{structurally consistent} if and only if all its vertex subsets are structurally consistent.
  We respectively mark $\tilcalP(\V)$ and $\tilfrakP(\V)$ the sets of structurally-consistent vertex subsets and vertex partitions:
  $$V \in \tilcalP(\V) \quad \Leftrightarrow \quad \forall \vv \in V^2, \quad v \sim v^\pm \text{.}$$
  $$\calV \in \tilfrakP(\V) \quad \Leftrightarrow \quad \forall V \in \calV, \quad V \in \tilcalP(\V) \text{.}$$
\end{definition}
The lower part of Figure~\ref{fig:simple_example} uses the fact that $v_1 \sim v_2 \sim v_3$ and that $v_4 \sim v_5$ to define a structurally-consistent vertex partition $\calV = \{V_1,V_2\}$ made of two structurally-consistent vertex subsets $V_1 = \{v_1,v_2,v_3\}$ and $V_2=\{v_4,v_5\}$.\\

Because all vertices belonging to a structurally-consistent vertex subset have the exact same neighbourhoods, one can use this structural redundancy to simplify the graph representation.
Such a compression first consists in aggregating all vertices in structurally-consistent subsets to form \emph{compressed vertices}, then in aggregating all edges between couples of structurally-consistent subsets to form \emph{compressed edges}.
The resulting \emph{compressed graph} provides a smaller, yet complete description of the initial one.
\begin{definition}[Compressed Directed Graph]~\\
  Given a directed graph $G = (\V,\E)$ and a structurally-consistent vertex partition $\calV \in \tilfrakP(\V)$, the \emph{compressed directed graph} $\calV(G) = (\calV,\calE)$ is the graph such that:
  \begin{itemize}
  \item $\calV$ is the set of \emph{(compressed) vertices};
  \item $\calE \subseteq \calVxV$ is the set of \emph{(compressed) directed edges} such that:
  \end{itemize}
    $$\forall \VV \in \calVxV, \quad \VV \in \calE \quad \begin{aligned}
      \Leftrightarrow \quad \forall v \in V, \; \forall v^\pm \in V^\pm, \; & \vv \in \E \\                                                     
      \Leftrightarrow \quad \exists v \in V, \; \exists v^\pm \in V^\pm, \; & \vv \in \E \text{.}
    \end{aligned}$$
    Note that both conditions are equivalent since $V$ and $V^\pm$ are structurally consistent.
\end{definition}
The lower part of Figure~\ref{fig:simple_example} shows the effect of such a compression on the graph's representations.
Regarding adjacency lists,
no more than one compressed edge is encoded between two given compressed vertices.
Regarding the adjacency matrix,
cells are merged into ``rectangular tiles'' containing only one binary value for each couple of compressed vertices.\\

These definitions lead to a well-known combinatorial problem that we call here the \emph{Lossless Graph Compression Problem} (Lossless GCP).
It simply consists in finding the quotient set of the structural equivalence relation, that is the smallest struc\-tu\-ral\-ly-consistent partition of $\V$.
\begin{definition}[The Lossless Graph Compression Problem]~\\
  Given a directed graph $G = (\V,\E)$, find a structurally-consistent vertex partition $\calV^* \in \tilfrakP(\V)$ with minimal size $|\calV^*|$:
  $$\calV^* \; = \; \argmin_{\calV \in \tilfrakP (\V)} \; |\calV| \text{.}$$
\end{definition}
In Figure~\ref{fig:simple_example}, the represented structurally-consistent vertex partition is the smallest:
One cannot find such another partition that contains fewer vertex subsets.
This is hence the most optimal lossless compression of the graph.

\subsection{Related Problems}


Note that \emph{structural} equivalence is the stricter form of vertex equivalence one might consider for graph analysis~\cite{Batagelj92, Hanneman05}.
Yet, other equivalence relations are traditionally used in the literature in social sciences for the detection of other kinds of structural patterns, such as
\emph{automorphic} equivalence (two vertices are equivalent if there is an isomorphic graph such that these vertices are interchanged)
and \emph{regular} equivalence~\cite{Borgatti92} (two vertices are equivalent if they are equally related to other equivalent classes).
Because these two latter equivalence relations are less strict, they induce smaller vertex partitions with bigger classes.
But more importantly, and contrary to structural equivalence, the resulting compression scheme is not reversible in the sense that one cannot find back the initial graph from the equivalent classes and their compressed edges.

Structural equivalence, and so the GCP, is also related to \emph{community detection}~\cite{Fortunato10}, also known as \emph{graph clustering}, that is a classical problem for graph analysis which consists in finding groups of vertices that are strongly connected with each others while being loosely connected to other groups.
However, dense and isolated clusters are only particular examples of structurally-consistent classes.
They correspond to dense diagonal blocks in the adjacency matrix.
The notion of structural equivalence is more generally interested in groups of vertices with similar relational patterns, that is in any block of equal-density within the adjacency matrix (not necessarily dense and not necessarily on the diagonal, as in other work focusing on \emph{block compression}~\cite{LeFevre10, Toivonen11, Serafino13}).
Hence, the GCP is more strongly related to the family of \emph{edge compression techniques}~\cite{Dwyer13} such as \emph{modular decomposition}\footnote{Not to be confused with modularity-based clustering, which is a form of community detection.}~\cite{Serafino13}, \emph{matching neighbours}, and \emph{power graph analysis}~\cite{Ahnert14}.
In the latter, one is searching for groups of vertices that have similar relation patterns of any sort.
Because it is more generally interested in the compression of equal-density blocks, the GCP can lastly be seen as a strict instance of \emph{block modelling}~\cite{Lorrain71, Batagelj92, Borgatti92, Guigoures12}, another classical method of network analysis that relies on structural equivalence to discover roles and positions in social networks.

\subsection{Possible Generalisations}

This first formulation of the GCP is restricted to static simple graphs.
Moreover, it only allows lossless compression, that is compression of vertices with \emph{identical} neighbourhoods, which is a quite stringent and unrealistic condition for empirical research.
In what follows, we list the requirements to formulate a more general and more flexible optimisation problem allowing for the lossy compression of temporal graphs.

\begin{description}
\item [From simple graphs to multigraphs (see \ref{ssec:unweighted}).]
  This first version of the GCP is restricted to \emph{simple graphs} (no more than one edge between two given vertices).
  Yet, it is easily generalisable to \emph{multigraphs} (multiple edges are allowed between two given vertices).
  Such a generalisation has two advantages.
  First, multigraphs are strictly more general than simple graphs since simple graphs can be considered as a particular cases of multigraphs. 
  Second, multigraphs are more consistent with the lossy compression scheme later presented since the end result of lossy compression is not necessarily a simple graph (as edges are aggregated into multiedges during compression).
  
\item [From lossless to lossy compression (see \ref{ssec:lossy}).]
  This first version of the GCP is \emph{lossless} in the sense that the result of compression contains all the information that is required to errorlessly build back the initial graph.
  However, such a lossless compression -- relying on \emph{exact} equivalence -- is quite inefficient in the case of real graphs within which \emph{identical} neighbourhoods are quite unlikely.
  One hence needs a measure of \emph{information loss} to allow for a more flexible compression scheme.

\item [From vertex to edge partitions (see \ref{ssec:powergraph}).]
  This first version of the GCP consists in finding an interesting \emph{vertex partition} to compress the graph, thus inducing a partition of its edges.
  This relates to classical approaches such as \emph{modular decomposition} where subsets of vertices (modules) that have similar neighbourhoods are exploited to compressed the graph's structure.
  However, this can be generalised to the direct search for \emph{edge partitions}, that is the search for interesting edge subsets that do not all necessarily rest on similar vertex subsets.
  This relates to less known approaches such as \emph{power-graph decomposition} that allows for a more subtle analysis of the graph's structure.

\item [Adding constraints to the set of feasible vertex subsets (see \ref{ssec:constrained}).]
  In this first version of the GCP, one considers any possible vertex subset as a potential candidate for compression, thus leading to an \emph{unconstrained} compression scheme.
  However, in order to represent and to preserve \emph{additional constraints} that might apply on the vertex structure, one might want to only consider ``feasible'' vertex subsets when searching for an optimal partition.
  This requires to integrate such additional constraints within the compression scheme.
  
\item [From static graphs to link streams (see \ref{ssec:dynamic}).]
  Our last generalisation step consists in integrating a temporal dimension within the optimisation problem in order to deal with the compression of \emph{link streams}.
  The structural equivalence relation hence needs to be redefined with respect to this additional dimension and equivalent classes will then be only valid on given time intervals.
  In this context, one is hence searching for aggregates that partition the Cartesian product of the vertex set and of the temporal dimension.

\end{description}


\pagebreak
\section{Generalisation: From Lossless Static Graphs to Lossy Mutlistreams}
\label{sec:generalisation}


\subsection{From Graphs to Multigraphs}
\label{ssec:unweighted}

Most approaches in the domain of graph theory focus on the analysis of \emph{simple graphs}, that is graphs for which at most one edge is allowed between two vertices, thus represented as binary adjacency matrices.
This is also the case when it comes to the field of graph compression (see for example~\cite{Borgatti92, Navlakha08, Zhang10, LeFevre10, Dwyer13}). 
Yet, in the scope of this article, we aim at the compression of \emph{multigraphs}, that is graphs for which multiple edges are allowed between two vertices, thus represented as integer adjacency matrices.
As simple graphs are special cases of multigraphs, the resulting approach is necessarily more general.

In some articles on graph compression, the generalisation to multigraphs would be quite straightforward as the result of compression -- that is the compressed graph -- already is, in fact, a multigraph (see for example~\cite{LeFevre10}).
Even if not explicitly formalised, research perspectives in that direction are sometimes provided~\cite{Ahnert14}.
Yet, other approaches natively deals with multigraph compression by directly taking into account, within the compression scheme, the presence of multiple edges~\cite{Toivonen11, Serafino13}.
Statistical methods for \emph{variable co-clustering} also offers compression frameworks that are designed for numerical (non-binary) matrices~\cite{Dhillon03, Guigoures12, Narmadha16}.
This is the approach we choose here by directly working with multigraphs.

\begin{definition}[Directed Multigraph]~\\
  \label{def:multigraph}
  A \emph{directed multigraph} $\MG = (\V,e)$ is characterised by:
  \begin{itemize}
  \item A set of \emph{vertices} $\V$;
  \item A multiset of \emph{directed edges} $(\bVxV, e)$\\
    where $e : \bVxV \rightarrow \mathbb{N}$ is the \emph{edge function}, that is the multiplicity function counting the number of edges $e\vv \in \mathbb{N}$ going from a given source vertex $v \in \V$ to a given target vertex $v^\pm \in \V$.\\[1ex]
    We also define the additive extension of the edge function on couples of vertex subsets:
    $$e : \calP(\V){\times}\calP(\V) \rightarrow \mathbb{N} \quad \text{such that} \quad \ef\VV \; = \; \sum_{\vv \in \VxV} {\ef\vv} \text{.}$$
  It simply counts the number of edges going from any vertex of a given source subset $V \in \calP(\V)$ to any vertex of a given target subset $V^\pm \in \calP(\V)$.
  In particular, $e\bVV$ is the total number of edges in the multigraph, $e(\V,v)$ is the in-coming degree of $v$ and $e(v,\V)$ its out-going degree.
  \end{itemize}
\end{definition}
The upper part of Figure~\ref{fig:lossy_graph} gives an example of directed multigraph made of $|\V| = 5$ vertices, that is $|\bVxV| = 25$ multiedges, and $e\bVV = 40$ edges distributed within $\bVxV$.
It is represented in the form of \emph{adjacency lists} (on the left), where each multiedge is represented as an arrow which width is proportional to the number of edges $e\vv$ going from a source vertex $v$ to a target vertex $v^\pm$, as well as in the form of an \emph{adjacency matrix} (on the right), where the edge function is represented as an integer matrix of size $|\V|{\times}|\V|$.

Structural equivalence could then be generalised to multigraphs in order to define, as done previously for simple graphs, a Lossless Multigraph Compression Problem (MGCP).
In few words, the structural equivalence relation would be defined on directed multigraphs by the equality of the edge function:
Two vertices are hence structurally equivalent if and only if they are each connected the same number of times to the different graph's vertices.
However, as we are interested in this article in lossy compression, we directly consider an alternative version of the MGCP that relies on a stochastic relaxation of the structural equivalence relation and on an appropriate measure of information loss.

\begin{figure}
  \centering
  \mycenter{\begin{tikzpicture}[scale=0.65]

  \node [align=center] (graph) at (0,-0.25) {
    \begin{tikzpicture}
      \tikzPrintSimpleCoord
      
      \node [mynode] (v1) at (c1) {$v_1$};
      \node [mynode] (v2) at (c2) {$v_2$};
      \node [mynode] (v3) at (c3) {$v_3$};
      \node [mynode] (v4) at (c4) {$v_4$};
      \node [mynode] (v5) at (c5) {$v_5$};

      \tikzPrintLoop{v1}{1}{1}
      \tikzPrintEdge{v1}{v3}{2}
      \tikzPrintEdge{v1}{v4}{9}
      \tikzPrintEdge{v1}{v5}{2}

      \tikzPrintLoop{v2}{1}{1}
      \tikzPrintEdge{v2}{v4}{10}
      \tikzPrintEdge{v2}{v5}{6}

      \tikzPrintEdge{v3}{v1}{1}
      \tikzPrintEdge{v3}{v2}{5}
      \tikzPrintLoop{v3}{1}{1}
      \tikzPrintEdge{v3}{v4}{9}
      \tikzPrintEdge{v3}{v5}{5}

      \tikzPrintEdge{v4}{v1}{4}
      \tikzPrintEdge{v4}{v2}{4}
      \tikzPrintEdge{v4}{v3}{5}
      \tikzPrintLoop{v4}{11}{3}
      \tikzPrintEdge{v4}{v5}{10}

      \tikzPrintEdge{v5}{v1}{5}
      \tikzPrintEdge{v5}{v2}{5}
      \tikzPrintEdge{v5}{v3}{4}
      \tikzPrintEdge{v5}{v4}{10}
      \tikzPrintLoop{v5}{10}{3}
    \end{tikzpicture}
  };

  \node [align=center] (matrix) at (7,0) {
    \begin{tikzpicture}[scale=0.8]

      \tikzPrintLabels

      \tikzPrintCell{0}{4}{$1$}{0.008}
      \tikzPrintCell{1}{4}{$0$}{0.000}
      \tikzPrintCell{2}{4}{$2$}{0.017}
      \tikzPrintCell{3}{4}{$9$}{0.075}
      \tikzPrintCell{4}{4}{$2$}{0.017}

      \tikzPrintCell{0}{3}{$0$}{0.000}
      \tikzPrintCell{1}{3}{$1$}{0.008}
      \tikzPrintCell{2}{3}{$0$}{0.000}
      \tikzPrintCell{3}{3}{$10$}{0.083}
      \tikzPrintCell{4}{3}{$6$}{0.050}

      \tikzPrintCell{0}{2}{$1$}{0.008}
      \tikzPrintCell{1}{2}{$5$}{0.042}
      \tikzPrintCell{2}{2}{$1$}{0.008}
      \tikzPrintCell{3}{2}{$9$}{0.075}
      \tikzPrintCell{4}{2}{$5$}{0.042}

      \tikzPrintCell{0}{1}{$4$}{0.033}
      \tikzPrintCell{1}{1}{$4$}{0.033}
      \tikzPrintCell{2}{1}{$5$}{0.042}
      \tikzPrintCell{3}{1}{$11$}{0.092}
      \tikzPrintCell{4}{1}{$10$}{0.083}

      \tikzPrintCell{0}{0}{$5$}{0.042}
      \tikzPrintCell{1}{0}{$5$}{0.042}
      \tikzPrintCell{2}{0}{$4$}{0.033}
      \tikzPrintCell{3}{0}{$10$}{0.083}
      \tikzPrintCell{4}{0}{$10$}{0.083}
    \end{tikzpicture}
  };

  \node [align=center] (aggGraph) at (0,-6.25) {
    \begin{tikzpicture}
      \tikzPrintSimpleCoord

      \node (v1) at (c1) {};
      \node (v2) at (c2) {};
      \node (v3) at (c3) {};
      \node (v4) at (c4) {};
      \node (v5) at (c5) {};



      \node [mynode] (v123) at ($0.5*(v1)+0.5*(v3)-0.25*(v2)$) {$\{v_1, v_2, v_3\}$};
      \node [mynode] (v45) at ($0.5*(v4)+0.5*(v5)$) {$\{v_4, v_5\}$};

      \tikzPrintLoop{v123}{1.2}{1}
      \tikzPrintEdge{v123}{v45}{6.8}
      \tikzPrintEdge{v45}{v123}{4.5}
      \tikzPrintLoop{v45}{10.25}{3}

    \end{tikzpicture}
  };

  \node [align=center] (aggMatrix) at (7,-6) {
    \begin{tikzpicture}[scale=0.8]
      
      \node [labels] at (1,5) {$\{v_1, v_2, v_3\}$};
      \node [labels] at (3.5,5) {$\{v_4, v_5\}$};

      \node [labels] at (-1,0.5) {\rotatebox{90}{$\{v_5, v_4\}$}};
      \node [labels] at (-1,3) {\rotatebox{90}{$\{v_3, v_2, v_1\}$}};

      \tikzPrintRectangle{0}{5}{3}{2}{$11$}{0.010}
      \tikzPrintRectangle{3}{5}{5}{2}{$41$}{0.057}
      \tikzPrintRectangle{0}{2}{3}{0}{$27$}{0.038}
      \tikzPrintRectangle{3}{2}{5}{0}{$41$}{0.085}
    \end{tikzpicture}
  };

\end{tikzpicture}

  \caption{
    Lossy compression of a 5-vertex, 40-edge multigraph (above) into a 2-vertex, 40-edge multigraph (below).
    In the \emph{adjacency-list representation} (on the left), the width of arrows is proportional to the edge function, that is to the number of edges going from a source vertex to a target vertex.
  }
  \label{fig:lossy_graph}
\end{figure}
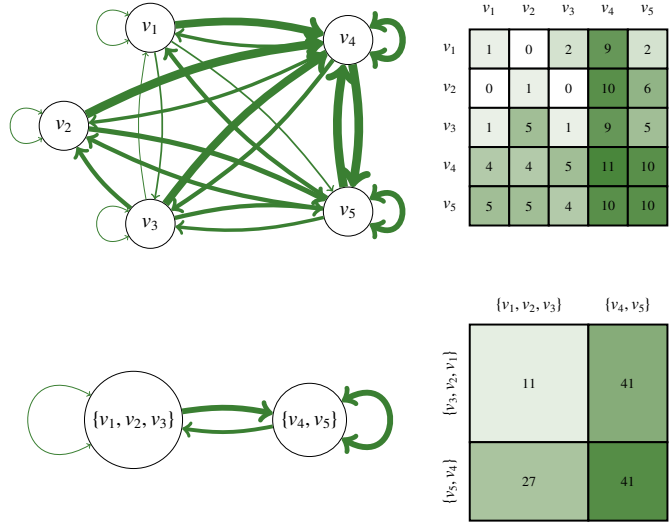

\begin{figure}
  \centering
  \mycenter{\input{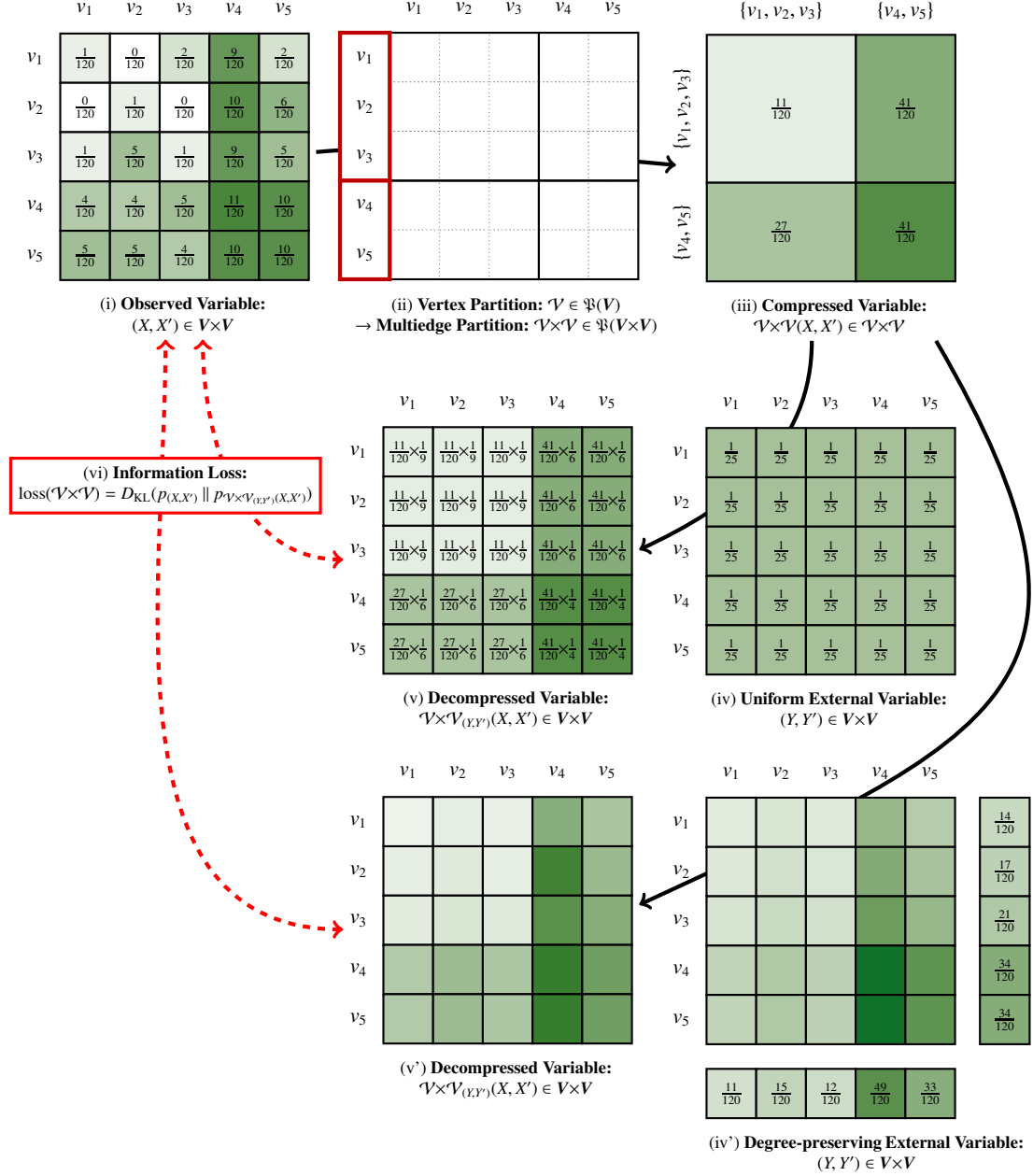}}
  \caption{
    Lossy compression consists in
    (i)~modelling the multigraph as a random variable $\XX$ having the empirical distribution of edges in $\bVxV$,
    (ii)~choosing a vertex partition $\calV$, and so a multiedge partition $\calVxV$ that is used to compress $\XX$,
    (iii)~computing the distribution of the resulting compressed variable $\calVxV\XX$ by applying partition $\calVxV$ onto $\XX$,
    (iv)~taking an external variable $\YY$ (for example (iv)~uniformly distributed or (iv')~preserving the degree profile of vertices) to project back the distribution of $\calVxV\XX$ into $\bVxV$,
    (v)~computing the distribution of the resulting decompressed variable $\calVxV_\YY\XX$ by first choosing $\XX$, then choosing $\YY$ conditioned on $\calVxV(\pYY) = \calVxV\XX$,
    and (vi)~comparing the distribution of $\XX$ with the distribution of $\calVxV_\YY\XX$ by using information-theoretic measures such as the entropy of $\XX$ relative to $\calVxV_\YY\XX$.
  }
  \label{fig:lossy_compression}
\end{figure}

\subsection{From Lossless to Lossy Compression}
\label{ssec:lossy}

Information-theoretic compression first requires a stochastic model of the data, that is a model of the multigraph to be compressed.
The measure of information loss that we present in this subsection has been previously introduced for the aggregation of geographical data~\cite{RLP_TCCI14} and for the summarisation of execution traces of distributed systems~\cite{RLP_CLUSTER14}.
The first contribution of this article in this regard is the application of this measure to graph compression.
Moreover, the underlying stochastic models were not made explicit in previous work.
Our second contribution is hence the thorough formalisation of the graph model we use, in order to properly justify and interpret the resulting measure.
In few words, we model a multigraph as a set of edges that are stochastically distributed within the two-dimensional set of multiedges $\bVxV$:
Each edge has a particular location within this space, characterised by its source vertex and its target vertex.
The {edge function} $e : \bVxV \rightarrow \mathbb{N}$ hence characterises the \emph{empirical distribution} of the edges within the multigraph, thus allowing to model the data as a discrete random variable $\XX$ taking on $\bVxV$.
\begin{definition}[Observed Variable]~\\
  \label{def:empirical}
  The \emph{observed variable} $\XX \in \bVxV$ associated with a multigraph $\MG = (\V,e)$ is a couple of discrete random variables having the empirical distribution of edges in $\MG$:
  $$\Pr(\XX = \vv) \quad = \quad \frac{\ef\vv}{e\bVV} \quad \defeq \quad p_\XX\vv \text{.}$$
\end{definition}
In other terms, $p_\XX\vv$ represents the probability that, if one chooses an edge at random among the $e\bVV$ edges of the multigraph, it will go from the source vertex $v$ to the target vertex $v^\pm$.
For example, matrix~(i) in Figure~\ref{fig:lossy_compression} represents the distribution of the observed variable associated with the multigraph of Figure~\ref{fig:lossy_graph}.\\

We then define the edge distribution of a multigraph that have been compressed according to a vertex partition $\calV \in \frakP(\V)$ by defining a second random variable taking on the multiedge partition $\calVxV \in \frakP(\bVxV)$.
\begin{definition}[Compressed Variable]~\\
  \label{def:compressed}
  The \emph{compressed variable} $\calVxV\XX \in \calVxV$ associated with an observed variable $\XX \in \bVxV$ and a vertex partition $\calV \in \frakP(\V)$ is the unique couple of vertex subsets in $\calVxV$ that contains $\XX$:
  $$\calVxV\XX \quad = \quad (\calV(X),\calV(X^\pm)) \quad \in \quad \calVxV \text{.}$$ 
  It hence has the following distribution\footnotemark{}:
  \begin{align*}
    \Pr(\calVxV\XX = \VV) \quad = \quad \frac{\ef\VV}{\ef\bVV} \quad \defeq \quad p_{\calVxV\XX} \VV \text{.}
  \end{align*}
\end{definition}
\footnotetext{
  By applying the law of total probability:
  \begin{align*}
    \Pr(\calVxV\XX = \VV) \; & = \; \sum_{\vv \in \bVxV} {\underbrace{\Pr(\calVxV\XX = \VV  \given  \XX = \vv)}_{= 1  \text{\ if\ } \vv \in \VxV, \text{\ and\ } 0 \text{\ else}} \; \Pr(\XX = \vv)}\\
                             & = \; \sum_{\vv \in \VxV} {\Pr(\XX = \vv)} \; = \; \sum_{\vv \in \VxV} {\frac{e\vv}{e\bVV}} \; = \; \frac{\ef\VV}{\ef\bVV}
  \end{align*}
  Note that, more generally, compression could be defined for any (possibly stochastic) function of the observed variable $\XX$, thus modelling what is sometimes called a ``soft partitioning'' of the vertices.
  The information-theoretic framework presented in this article, along with all the measures it contains, are straightforwardly generalisable to such setting.
  However, because it is often much easier to interpret the result of compression when it is based on ``hard partitioning'', especially in the case of vertex partitioning, we focus in this article on this simpler setting.
}
In other terms, $p_{\calVxV\XX} \VV$ represents the probability that, if one chooses an edge at random among the $e\bVV$ edges of the multigraph, it will go from a vertex of the source subset $V$ to a vertex of the target subset $V^\pm$.
For example, matrix~(ii) in Figure~\ref{fig:lossy_compression} represents the multiedge partition $\calVxV$ induced by the vertex partition $\calV = \{\{v_1,v_2,v_3\},\{v_4,v_5\}\}$ and matrix~(iii) then represents the distribution of the resulting compressed variable $\calVxV\XX$.\\

In order to quantify the information that has been lost during this compression step, we propose to compare the information that is contained in the initial multigraph (that is in the observed variable $\XX$) with the information that is contained in the compressed multigraph (that is in the compressed variable $\calVxV\XX$).
To do so, we project back the compressed distribution onto the initial value space $\bVxV$ by defining a third random variable, the \emph{external variable} $\YY \in \bVxV$, that models additional information that one might have at his or her disposal when trying to decompress the multigraph.
It is hence assumed to have a distribution that is somehow ``informative'' of the initial distribution.
It then induces a fourth variable, the \emph{decompressed variable} $\calVxV_\YY\XX$, that models an approximation of the initial multigraph inferred from the combined knowledge of the compressed variable $\calVxV\XX$ and of the external variable $\YY$.
This last variable is 
hence defined according to the distribution of the external variable within the multiedge subsets. 
\begin{definition}[Decompressed Variable]~\\
  \label{def:decompressed}
  The \emph{decompressed variable} $\calVxV_\YY\XX \in \bVxV$ associated with an observed variable $\XX \in \bVxV$, a vertex partition $\calV \in \frakP(\V)$, and an external variable $\YY \in \bVxV$, is the result of this external variable $\YY$ conditioned by its compression $\calVxV(\pYY)$ being equal to the result of the compressed variable $\calVxV\XX$:
  $$\calVxV_\YY\XX \quad = \quad \YY  \given  \calVxV(\pYY) = \calVxV\XX \quad \in \quad \bVxV \text{.}$$
  It hence has the following distribution\footnotemark{}:
  $$\Pr (\calVxV_\YY\XX = \vv) \quad = \quad \frac{\ef(\calVxV\vv)}{\ef\bVV} \; \frac{p_\YY \vv}{p_{\calVxV\YY} (\calVxV\vv)} \quad \defeq \quad p_{\calVxV_\YY\XX} \vv \text{.}$$
\end{definition}
\footnotetext{
  By applying the law of total probability:
  \begin{align*}
    \Pr(\calVxV_\YY\XX = \vv) \quad & = \quad \Pr (\YY = \vv  \given  \calVxV(\pYY) = \calVxV\XX)\\
                                    & = \quad \sum_{\VV \in \calVxV} \underbrace{\Pr (\YY = \vv  \given  \calVxV(\pYY) = \VV)}_{
                                      = \; 0 \text{ if } \VV \; \neq \; \calVxV\vv
                                      } \; \Pr (\calVxV\XX = \VV) \\
                                    & = \quad \Pr (\YY = \vv  \given  \calVxV(\pYY) = \calVxV\vv) \; \Pr (\calVxV\XX = \calVxV\vv) \\
                                    & = \quad \frac{\Pr (\YY = \vv)}{\Pr (\calVxV(\pYY) = \calVxV\vv)} \; \frac{\ef(\calVxV\vv)}{\ef\bVV} \text{.}
  \end{align*}    
}
In other terms, $p_{\calVxV_\YY\XX}\vv$ represents the probability that, if one
(i)~chooses an edge $\uu$ at random among the $e\bVV$ edges of the multigraph,
(ii)~considers its compressed multiedge subset $\calVxV\uu = \VV \in \calVxV$,
and (iii)~chooses a source vertex within $V$ and a target vertex within $V^\pm$ according to the distribution $p_\YY$ of the external variable within this multiedge subset,
then one will result with an edge going from the source vertex $v$ to the target vertex $v^\pm$.
For example, matrix~(iv) in Figure~\ref{fig:lossy_compression} represents a uniformly-distributed external variable $\YY \in \bVxV$ that is used to decompress $\calVxV\XX$ (see \emph{Blind Decompression} below).
Matrix~(iv') represents another such external variable (see \emph{Degree-preserving Decompression} below).
Matrices~(v) and~(v') then represent the distribution of the resulting uniformly decompressed variable $\calVxV_\YY\XX$.

\paragraph{Blind Decompression}
When the external variable $\YY$ is uniformly distributed on $\bVxV$, the decompression step is done without any additional information about the initial edge distribution: 
$$p_{\YY}\vv \; = \; \frac{1}{|\bVxV|} \quad \Rightarrow \quad p_{\calVxV_\YY\XX} \vv \; = \; \frac{\ef(\calVxV\vv)}{\ef\bVV} \frac{1}{|\calVxV\vv|} \text{.}$$
In this case, only the knowledge of the compressed variable hence is exploited.
The decompressed variable is hence the result of a uniform trial among the multiedges contained in $\calVxV\XX$, that is a ``maximum-entropy sampling'' guarantying that no additional information has been injected during decompression.

\paragraph{Reversible Decompression}
To the contrary, when the external variable $\YY$ has the same distribution than the observed variable $\XX$, then the decompression step is done with a full knowledge of the initial edge distribution:
$$p_\YY\vv \; = \; p_\XX\vv \quad \Rightarrow \quad {\calVxV_\YY\XX} \vv \; = \; \frac{\ef\vv}{\ef\bVV} \; = \; p_\XX\vv \text{.}$$
In this case, the decompressed variable also has the same distribution than the observed variable, meaning that one fully restores the initial multigraph when decompressing.

\paragraph{Degree-preserving Decompression}
An intermediary example of external information can be derived from the knowledge of the vertex degrees in the initial multigraph:
\begin{multline*}
  p_{\YY}\vv \; = \; p_X(v) \; p_{X^\pm}(v^\pm) \; = \; \displaystyle\frac{\ef(v,\V)}{\ef\bVV} \frac{\ef(\V,v^\pm)}{\ef\bVV} \\
  \Rightarrow \quad p_{\calVxV_\YY\XX} \vv \; = \; \displaystyle\frac {\ef(\calVxV\vv)} {\ef\bVV} \displaystyle\frac {\ef(v,\V) \; \ef(\V,v^\pm)} {\ef(\calV(v),\V) \; \ef(\V,\calV(v^\pm))} \text{.}
\end{multline*}
In this case, the decompression step takes into account the initial vertex degrees and the resulting multigraph hence has the same degree profile than the initial one.
The corresponding generative model is hence similar to the one of a \emph{configuration model}~\cite{Hofstad16}:
Multigraphs are sampled according to the compressed variable, while also preserving the initial degree profile.\\

Now that we have defined compression and decompression as sequential operations on stochastic variables, we exploit a classical measure of information theory to quantify the information that is lost during such a process.
Intuitively, it consists in comparing the initial edge distribution (the one of the observed variable $\XX$) with the approximated edge distribution  (the one of the decompressed variable $\calVxV_\YY\XX$). 
In this article, we propose to do so by using the \emph{relative entropy} of these two distributions -- also known as the \emph{Kullback-Leibler divergence}~\cite{Kullback51, Cover91} -- as it is the most canonical measure of dissimilarity provided by information theory to compare an approximated probability distribution to a real one.
\begin{definition}[Information Loss]~\\
  \label{def:information_loss}
  The \emph{information loss} induced by a vertex partition $\calV \in \frakP(\V)$ on an observed variable $\XX \in \bVxV$, and according to an external variable $\YY \in \bVxV$, is given by the entropy of $\XX$ relative to ${\calVxV_\YY\XX}$: 
\begin{align*}
  \lossf{\XX}{\calVxV}{\YY} \quad 
                               & = \quad \sum_{\vv \in \bVxV} p_\XX\vv \log_2 \left( \frac{p_\XX\vv}{p_{\calVxV_\YY\XX} \vv} \right)\\
                               & = \quad \sum_{\vv \in \bVxV} \frac{\ef\vv}{\ef\bVV} \log_2 \left( \frac{\ef\vv}{\ef(\calVxV\vv)} \bigg/ \frac{p_\YY \vv}{p_{\calVxV\YY} (\calVxV\vv)} \right)
\end{align*}
Note that information loss is \emph{additively decomposable}.
It can be expressed as a sum of information losses defined at the subset level instead of at the partition level:
$$\lossf{\XX}{\calVxV}{\YY} \; = \; \sum_{\VV \in \calVxV} {\lossf{\XX}{\pVV}{\YY}} \quad \text{with\ } \lossf{\XX}{\pVV}{\YY} \; = \; \sum_{\vv \in \VxV} \frac{\ef\vv}{\ef\bVV} \log_2 \left( \frac{\ef\vv}{\ef\VV} \bigg/ \frac{p_\YY \vv}{p_{\calVxV\YY} \VV} \right) \text{.}$$

\end{definition}
Intuitively, this measure considers the following reconstruction task:
Imagine that all the edges of a multigraph have been ``detached'' from their vertices and put into a bag.
An observer is now taking one edge out of the bag and tries to guess its initial location, that is its source and target vertices.
We then compare two peoples trying to do so:
One having a perfect knowledge of the distribution $p_\XX$ of the edges in the initial multigraph (\emph{e.g.}, matrix~(i) in Figure~\ref{fig:lossy_compression});
The second only having an approximation $p_{\calVxV_\YY\XX}$ of this distribution, obtained through the compression, then the decompression of the initial distribution (\emph{e.g.}, matrices~(v) and~(v') in Figure~\ref{fig:lossy_compression}).
Relative entropy then measures the average quantity of \emph{additional} information (in bits per edge) that the second observer needs in order to make a guess that is as informed as the guess of the first observer.
In other words, relative entropy quantifies the information that has been lost during compression,
that is no longer contained in the compressed graph, and that cannot be retrieved from the knowledge of the external variable.

\paragraph{Blind Decompression}
When the external variable is uniformly distributed on $\bVxV$, relative entropy simply quantifies the information that has been lost during compression, without the help of any additional information:
$$p_{\YY}\vv \; = \; \frac{1}{|\bVxV|} \quad \Rightarrow \quad \lossf{\XX}{\pVV}{\YY} \; = \; \sum_{vv \in \VxV} {\frac{\ef\vv}{\ef\bVV} \log_2 \left(\frac{\ef\vv}{\ef\VV} |\VxV| \right)} \text{.}$$

\paragraph{Reversible Decompression}
When the external variable $\YY$ has the same distribution than the observed variable $\XX$, compression then induces no information loss -- whatever the chosen vertex partition $\calV$ -- since all the information required to reconstruct the initial multigraph is reinjected during the decompression step:
$$p_\YY\vv \; = \; p_\XX\vv \quad \Rightarrow \quad \lossf{\XX}{\pVV}{\YY} \; = \; 0\text{.}$$

\paragraph{Degree-preserving Decompression}
In this intermediary context, relative entropy quantifies the information that has been lost during compression, and that cannot be retrieved from the additional knowledge of the vertex degrees in the initial multigraph:
\begin{multline*}
  p_{\YY}\vv \; = \; p_X(v) \; p_{X^\pm}(v^\pm) \; = \; \displaystyle\frac{\ef(v,\V)}{\ef\bVV} \frac{\ef(\V,v^\pm)}{\ef\bVV} \\
  \Rightarrow \quad \lossf{\XX}{\pVV}{\YY} \; = \; \displaystyle\sum_{\vv \in \VxV} { \frac{\ef\vv}{\ef\bVV} \log_2 \left( \frac {\ef\vv} {\ef\VV} \bigg/ \frac {\ef(v,\V) \; \ef(\V,v^\pm)} {\ef(V,\V) \; \ef(\V,V^\pm)} \right)} \text{.}
\end{multline*}

\paragraph{Related Measures}

Relative entropy is one among many measures that can be found in the literature to quantify information loss in graph compression.
Given an initial multigraph and a decompressed one, which is described by the approximated edge distribution, any measure of weighted graph similarity may be relevant to quantify the impact of compression~\cite{Zhou09}:
\emph{E.g.},
the percentage of edges in common,
the size of the maximum common subgraph or of the minimum common supergraph,
the {edit distance}, that is the insertion and removal of vertices and edges needed to go from one graph to another~\cite{He06},
or any measure aggregating the similarities of vertices from graph to graph (\emph{e.g.}, Jaccard index, Pearson coefficient, cosine similarity on vertex neighbourhoods).

More sophisticated graph-theoretical measures go beyond the mere level of edges by taking into account paths within the two compared graphs~\cite{Toivonen11}:
``[T]he best path between any two nodes should be approximately equally good in the compressed graph as in the original graph, but the path does not have to be the same.''
More generally, query-based measures aim at quantifying the impact of compression on the results of goal-oriented queries regarding the graph structure:
\emph{E.g.}, queries about shortest paths, about degrees and adjacency, about centrality and community structures (see for example~\cite{Feder95, Henandez12, Dhabu13}).
The expected difference between the results of queries on the initial graph and the results of queries on the decompressed graph thus provides a reconstruction error that serves as a goal-oriented information loss~\cite{LeFevre10}.
To some extent, we are interested in this article in adjacency-oriented queries, that is the most canonical ones, taking the perspective of the weighted adjacency matrices of the two compared graphs:
\emph{What is the weight of the multiedge located between two given vertices of the initial multigraph?}
Relative entropy measures the expected error when answering this query from the only knowledge of the compressed graph.

More generally, this perspective is related to the density profile of edges within the graph. 
Hence, density-based measures~\cite{Zhang10} seems more relevant than other traditional connectivity measures:
\emph{E.g.}, the Euclidean distance or the mean squared error between the two density matrices~\cite{Toivonen11, Serafino13},
the average variance within matrix blocks~\cite{Borgatti92}, and many other measures inherited from traditional block modelling methods~\cite{Batagelj92}.
Note that, when it comes to the latter, the stochastic model underlying our compression framework is similar, but not equivalent to the one of block modelling.
The compressed matrix describes in our case the parameters of a multinomial distribution from which the graph's edges are sampled, whereas it describes in the case of block modelling the parameters of $|\bVxV|$ independent Bernouilli distributions.
In other words, our model gives the probability that an edge -- taken at random -- is located between two given vertices, and not the probability that an edge exists between two given vertices (see for example block model compression in~\cite{LeFevre10}).

Because of this particular stochastic model, we chose in this article an information-theoretic approach to measure information loss.
This allows to derive a measure that is clearly in line with the defined model and which can be easily interpreted within the realm of information theory.
Among tools provided by this theory, other approaches use the principles of \emph{minimum description length}~\cite{Navlakha08, LeFevre10} to compress a graph using the density-based model in an optimal fashion.
In the same line of thinking, traditional tools for {Bayesian inference} propose to interpret the compressed graph as a generative model and the initial graph as observed data, then computes the likelihood of the data given the model as a measure of information loss~\cite{Guigoures12}.
A similar Bayesian interpretation of relative entropy could be given, as it measures the difference of likelihood between two generative models of the multigraph:
One corresponding to $e\bVV$ independent trials with the empirical distribution of the multigraph's edges;
The second corresponding to $e\bVV$ independent trials with the distribution obtained through compression, and then decompression.
Similarly, \emph{co-clustering}~\cite{Dhillon03} interprets the graph's adjacency matrix as the joint probability distribution of two random variables, then finds two vertex partitions that minimise the loss in mutual information between these two variables~\cite{Cover91} from the initial graph to the compressed graph.
This is shown to be equivalent to minimising the relative entropy between the initial distribution and a decompressed distribution that preserves the marginal values.
It is hence equivalent to our measure of information loss in the particular case of a decompression scheme that takes into account additional information regarding the vertex degrees.

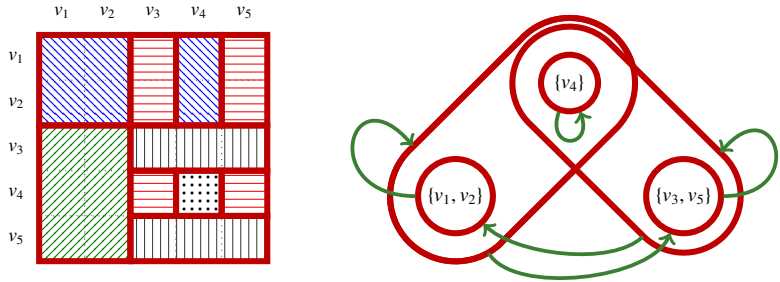
\begin{figure}[t!]
  \centering
  \begin{tikzpicture}[scale=0.75]

  \node [bigtext, align=center, anchor=south] at (0,0) {Grid multiedge partition $\calVxV \in \frakP(\bVxV)$\\
    \normalfont{that is the Cartesian product of a vertex partition $\calV \in \frakP(\V)$}};

  \begin{scope}[shift={(-5,-4.5)}, scale=0.8]
    \tikzPrintLabels
    \tikzPrintGrid

    \tikzPrintSAgg{0}{5}{1}{4}
    \tikzPrintSAgg{0}{4}{1}{2}
    \tikzPrintSAgg{0}{2}{1}{0}

    \tikzPrintSAgg{1}{5}{3}{4}
    \tikzPrintSAgg{1}{4}{3}{2}
    \tikzPrintSAgg{1}{2}{3}{0}

    \tikzPrintSAgg{3}{5}{5}{4}
    \tikzPrintSAgg{3}{4}{5}{2}
    \tikzPrintSAgg{3}{2}{5}{0}
    
    \tikzPrintAgg{-1}{5}{0}{4}
    \tikzPrintAgg{-1}{4}{0}{2}
    \tikzPrintAgg{-1}{2}{0}{0}
  \end{scope}

  \begin{scope}[shift={(3.75,-1.75)}]

    \node [mynode, agg] (v1) at (0,0) {$\{v_1\}$};
    \node [mynode, agg] (v23) at (-2,-2) {$\{v_2, v_3\}$};
    \node [mynode, agg] (v45) at (2,-2) {$\{v_4, v_5\}$};




    \draw [myedge, ->, line width=0.5mm] (v23) to [bend left=15] (v45);
    \draw [myedge, ->, line width=0.5mm] (v45) to [bend left=15] (v23);

    \draw [myedge, ->, line width=0.5mm] (v1) to [bend left=15] (v23);
    \draw [myedge, ->, line width=0.5mm] (v23) to [bend left=15] (v1);

    \draw [myedge, ->, line width=0.5mm] (v1) to [bend left=15] (v45);
    \draw [myedge, ->, line width=0.5mm] (v45) to [bend left=15] (v1);

    \draw [myedge, ->, line width=0.5mm] (v1) to [out=120,in=60,looseness=5] (v1);
    \draw [myedge, ->, line width=0.5mm] (v23) to [out=210,in=150,looseness=5] (v23);
    \draw [myedge, ->, line width=0.5mm] (v45) to [out=30,in=-30,looseness=5] (v45);  
  \end{scope}
\end{tikzpicture}

\vspace{2em}

\hspace{5.25ex} \begin{tikzpicture}[scale=0.75]
  \node [bigtext, align=center, anchor=south] at (2,0) {Cartesian multiedge partition $\calVV \in \frakP^\times(\bVxV)$\\
    \normalfont{consisting in Cartesian multiedge subsets $\VxV \in \calP(\bVxV)$}};

  \begin{scope}[shift={(-3.9,-4.5)}, scale=0.8]

    \tikzPrintLabels
    \tikzPrintGrid
    
    \tikzPrintAgg[pattern1]{0}{5}{2}{3}
    \tikzPrintAgg[pattern2]{2}{5}{3}{3}
    \tikzPrintAgg[pattern1]{3}{5}{4}{3}
    \tikzPrintAgg[pattern2]{4}{5}{5}{3}
    
    \tikzPrintAgg[pattern3]{0}{3}{2}{0}

    \tikzPrintAgg[pattern4]{2}{3}{5}{2}

    \tikzPrintAgg[pattern2]{2}{2}{3}{1}
    \tikzPrintAgg[pattern5]{3}{2}{4}{1}
    \tikzPrintAgg[pattern2]{4}{2}{5}{1}

    \tikzPrintAgg[pattern4]{2}{1}{5}{0}
  \end{scope}

  \begin{scope}[shift={(5,-1.75)}]

    \node [mynode, agg] (p4) at (0,0) {};
    \node [mynode, agg] (p12) at (-2,-2) {};
    \node [mynode, agg] (p35) at (2,-2) {};

    \draw [myobject, agg] \convexpath {p12,p4} {11.5mm};
    \draw [myobject, agg] \convexpath {p35,p4} {10mm};
    \draw [agg] \convexpath {p12,p4} {11.5mm};

    \node [mynode, agg] (v4) at (p4) {$\{v_4\}$};
    \node [mynode, agg] (v12) at (p12) {$\{v_1, v_2\}$};
    \node [mynode, agg] (v35) at (p35) {$\{v_3, v_5\}$};

    \node [circle, inner sep=0, minimum size=23mm] (av12) at (v12) {};
    \node [circle, inner sep=0, minimum size=20mm] (av35) at (v35) {};

    \draw [myedge, ->, line width=0.5mm] (v12) to [out=180,in=130,looseness=5] (av12);
    \draw [myedge, ->, line width=0.5mm] (v35) to [out=0,in=50,looseness=5] (av35); 
    \draw [myedge, ->, line width=0.5mm] (av12) to [out=-60,in=-110,looseness=0.7] (v35);
    \draw [myedge, ->, line width=0.5mm] (av35) to [out=-135,in=-45,looseness=0.7] (v12);
    \draw [myedge, ->, line width=0.5mm] (v4) to [out=-110,in=-70,looseness=5] (v4);


    



  \end{scope}

\end{tikzpicture}

  \caption{
    Two partitioning schemes that one might consider to define the solution space of the GCP.
    Multiedge subsets are represented with hatching when they are not ``compact tiles''.
  }
  \label{fig:schemes}
\end{figure}

\subsection{From Vertex to Edge Partitions}
\label{ssec:powergraph}

By providing a measure of information loss, previous subsection focuses on the \emph{objective function} of the GCP, that is on the quality measure to be optimised.
We now focus on the \emph{solution space} of this problem, that is the set of partitions that one actually consider for compression.
The original GCP presented in Section~\ref{sec:GCP} consists in using a vertex partition $\calV \in \frakP(\V)$ to then determine a multiedge ``grid'' partition (see top-left matrix of Figure~\ref{fig:schemes}):
$$\calVxV \quad = \quad \{ \VxV \; :\;  V \in \calV \; \wedge \; V^\pm \in \calV \} \quad \in \quad \frakP(\bVxV) \text{,}$$
such that two multiedges $\vv$ and $\uu$ are in the same multiedge subset $\VxV$ if and only if their source vertices are both in $V$ and their target vertices are both in $V^\pm$:
$$\calVxV \vv = \calVxV \uu \qquad \Leftrightarrow \qquad \calV(v) = \calV(u) \quad \text{and} \quad \calV(v^\pm) = \calV(u^\pm) \text{.}$$
In other terms, the induced two-dimensional partitioning of the multiedge set $\bVxV$ is the Cartesian product of a one-dimensional partitioning of the vertex set $\V$.
This first compression scheme is classically used in macroscopic graph models such as block models and community-based representations.
One of the reason is that the result of compression can still be represented as a graph, which vertices are the subsets of the vertex partition and which multiedges are the subsets of the multiedge ``grid'' partition (see top-right graph of Figure~\ref{fig:schemes})

However, this results in a quite constrained solution space for the optimisation problem:
Only a small number of partitions of $\bVxV$ are actually feasible.
One might instead consider a less constrained solution space by allowing a larger number of multiedge partitions to be used for compression, by for example considering partitions of the multiedge set $\bVxV$ that are made of Cartesian products of two vertex subsets $\VxV \subseteq \calP(\bVxV)$.
\begin{definition}[Cartesian Multiedge Partitions]~\\
  \label{def:cartesian_partition}
  Given a vertex set $\V$, the set of \emph{Cartesian multiedge partitions} $\frakP^{\cn{\times}} (\bVxV) \subset \frakP(\bVxV)$ is the set of multiedge partitions that are made of Cartesian products of vertex subsets\footnotemark:
  \begin{align*}
    \frakP^{\cn{\times}} (\bVxV) \quad = \quad \{ & \{(V_1 \cn{\times} V^\pm_1),\ldots,(V_m \cn{\times} V^\pm_m)\} \subseteq \calP(\V){\times}\calP(\V)\\
                                                  & : \; \cup_i (V_i \cn{\times} V^\pm_i) = \bVxV \; \wedge \; \forall i \neq j, \; (V_i \cn{\times} V^\pm_i) \cap (V_j \cn{\times} V^\pm_j) = \emptyset\} \text{.}
  \end{align*}
\end{definition}
\footnotetext{
  There is an abuse of notation in this definition when writing that ``$\VxV \in \calP(\V){\times}\calP(\V)$''.
  We should instead write that ``$\VxV \in \calP(\bVxV)$ such that $\VV \in \calP(\V){\times}\calP(\V)$'', but we prefer the former for notation conciseness.
}
Such Cartesian partitions of $\bVxV$ hence contains ``rectangular'' multiedge subsets\footnote{
  Note that each of these multiedge subsets is indeed a rectangle in the adjacency matrix, modulo a reordering of its rows and columns.
  It might happen however that no reordering can make all multiedge subsets look rectangular \emph{at the same time}.
}
(see bottom-left matrix of Figure~\ref{fig:schemes}).
Hence, although the result of compression can no longer be represented as a graph, it can be represented as a \emph{directed hypergraph} -- or as a \emph{power graph}~\cite{Dwyer13, Ahnert14} -- that is a graph where edges can join couples of vertex subsets instead of couples of vertices (see bottom-right graph of Figure~\ref{fig:schemes}).

\paragraph{Related Work}

Most classical approaches for graph compression are based on the discovery of \emph{vertex partitions}.
This is for example the case of most work on graph summarisation~\cite{Zhang10, LeFevre10}, block modelling~\cite{Borgatti92}, community detection~\cite{Fortunato10}, and modular decomposition~\cite{Serafino13}.
As previously said, one of the interests of such vertex partitioning is that the result of compression is still a graph (a set of compressed vertices connected by a set of compressed edges) that can hence be represented, analysed, and visualised with traditional tools~\cite{Navlakha08, Toivonen11}.
However, the number and diversity of feasible solutions -- that is the set of vertex partitions -- is quite limited when compated to the full space of matrix partitions.

As proposed above, some approaches hence focus on \emph{edge partitions}, instead of vertex partitions, in order to provide more expressive compression schemes, but only on \emph{particular} edge partitions, in order to preserve some fundamental structures of graph data.
The most generic framework is formalised by \emph{power graph analysis}~\cite{Dwyer13, Ahnert14} and \emph{Mondrian processes}~\cite{Roy09}, where edge subsets are only required to be the Cartesian product of vertex subsets.
Such edge subsets can hence be represented as compressed edges between couples of compressed vertices.
But since different edge subsets can lead to overlapping vertex subsets, the resulting compressed graph is no longer a graph, but an \emph{hypergraph} (or a \emph{power graph}).
While still interpretable within the broad realm of graph theory, power graphs are more expressive~\cite{Toivonen11} than classical approaches -- such as community partitions -- since a given vertex might belong to different similarity groups:
It might be similar to a given group of vertices with respect to a given part of the graph, and similar to another group of vertices with respect to another part (see examples in~\cite{Ahnert14}).


\subsection{Adding Constraints to the Set of Feasible Vertex Subsets}
\label{ssec:constrained}

A wide variety of approaches span from generic edge partitions to more constrained schemes that ``impose restrictions on the nature of overlap'' for example by ``fixing the topology of connectivity between overlapping nodes''~\cite{Ahnert14} such as in clique percolation, spin models, mixed-membership block models, latent attribute models, spectral clustering, and link communities.
This aims at expressing particular topological properties within the compression scheme in order to produce meaningful graphs with respect to some particular analysis framework.
``Frequent patterns possibly reflect some semantic structures of the domain and therefore are useful candidates for replacement''~\cite{Zhou09}.

In the partitioning scheme hereabove presented, the Cartesian product of \emph{any} two vertex subsets $\VV \in \calP(\V){\times}\calP(\V)$ gives a feasible multiedge subset $\VxV \in \calP(\bVxV)$.
In other terms, the solution space is only limited by this Cartesian principle and by the partitioning constraints (covering and non-overlapping subsets).
However, one can also control the shape of feasible partitions by directly specifying the feasible vertex subsets that can be combined to build the ``rectangular'' multiedge subsets of the Cartesian partition.
Such additional constraints might prove to be useful to incorporate domain knowledge about particular vertex structures within the compression scheme:
By driving the process, feasibility constraints might indeed facilitate the interpretation of the compression's result according to the expert domain.
\begin{definition}[Feasible Multiedge Partitions]~\\
  \label{def:feasible_partition}
  Given a set of feasible vertex subsets $\hatcalP(\V) \subseteq \calP(\V)$, the set of \emph{feasible Cartesian multiedge partitions} $\cn{\hatfrakP^\times} (\bVxV) \subseteq \frakP^\times (\bVxV)$ is the set of Cartesian multiedge partitions which multiedge subsets are only made of feasible vertex subsets:
\begin{align*}
  \cn{\hatfrakP^\times} (\bVxV) \; = \; \{ & \{(V_1{\times}V^\pm_1),\ldots,(V_m{\times}V^\pm_m)\} \in \cn{\hatcalP}(\V){\times}\cn{\hatcalP}(\V)\\
                                                   & : \; \cup_i (V_i{\times}V^\pm_i) = \bVxV \; \wedge \; \forall i \neq j, \; (V_i{\times}V^\pm_i) \cap (V_j{\times}V^\pm_j) = \emptyset\} \text{.}
\end{align*}
\end{definition}
In the most general case, the set of feasible vertex subsets $\hatcalP(\V)$ can be composed of all vertex subset in $\calP(\V)$.
However, one can use more constrained -- and hence more meaningful vertex structures -- to drive the compression scheme and its possible applications.
A survey about the types of constraints that have been used in many different domains can be found in~\cite{RLP_MPI14}.
We present below some of these structures and their basic combinatorics.

\paragraph{The Complete Case}

When no additional constraint applies to vertex subsets, then:
$$\hatcalP(\V) \; = \; \calP(\V) \text{,} \quad \text{and\ so} \quad \hatfrakP(\V) = \frakP(\V) \text{.}$$
Such scheme can be used when no additional useful structure is known about the vertex set, for example to model \emph{coalition structures} in multi-agent systems when assuming that every possible group of agents is an adequate candidate to constitute a coalition~\cite{Sandholm99,Rahwan08}. 
It has been shown that, in this case, the number of feasible partitions grows considerably faster than the number of feasible subsets~\cite{Sandholm02}:
$$|\calP(\V)| = 2^{n} \quad \text{and} \quad |\frakP(\V)| = \omega(n^{n/2}) \text{,}$$ 
where $n = |\V|$ is the number of vertices in the graph.

\paragraph{Hierarchies of Vertex Subsets}

A \emph{hierarchy} $\hatcalP(\V) = \calH(\V)$ is such that any two feasible vertex subsets are either disjoint, or one is included in the other:
$$\forall (V_1,V_2) \in \calH(\V)^2, \quad V_1 \cap V_2 = \emptyset \; \vee \; V_1 \subseteq V_2 \; \vee \; V_2 \subseteq V_1 \text{.}$$
We mark $\hatfrakP(\V) = \frakH(\V)$ the resulting set of feasible vertex partitions.
Such vertex hierarchies can be used to model graphs that are known to have a multilevel nested structure that one wants to preserve during compression.
This might include, among others, a \emph{community structure} that have been preliminarily identified through a hierarchical community detection algorithm~\cite{Pons11}, a sequence of \emph{geographical nested partitions} of the world's territorial units~\cite{RLP_TCCI14};
a \emph{hierarchical communication network} in distributed computers~\cite{RLP_CLUSTER14}.

It has been shown that the number of feasible subsets in a hierarchy is asymptotically bounded from above by the number of objects it contains~\cite{RLP_MPI14}:
$|\calH(\V)| = \mathcal{O}(n)$.
The resulting number of feasible partitions however depends on the number of levels and branches in the hierarchy.
For a complete binary tree, it is asymptotically bounded by an exponential function: $|\frakH(\V)| = \Theta(\alpha^n)$, with $\alpha \approx 1.226$~\cite{Cloitre02}.
Similar results have been found for complete ternary trees (with $\alpha \approx 1.084$~\cite{McGarvey07}), complete quaternary trees, and so on.
Henceforth, for any bounded number of children per node in the hierarchy, 
the number of feasible partitions {exponentially} grows with the number of objects. 

\paragraph{Sets of Vertex Intervals}

Any total order $<$ on the vertex set $\V$ induces a \emph{set of vertex intervals} $\hatcalP(\V) = \calI(\V)$ defined as follows:
$$\calI(\V) \; = \; \{ [v_1,v_2] \; : \; (v_1,v_2) \in \V^2 \} \quad \text{where} \quad [v_1,v_2] \; = \; \{v \in \V : v_1 \leq v \leq v_2\} \text{.}$$
We mark $\hatfrakP(\V) = \frakI(\V)$ the resulting set of feasible vertex partitions.
It can be represented as a ``pyramid of intervals''~\cite{RLP_MPI14} and such partitions are sometimes called \emph{consecutive partitions}~\cite{Anily91}.
Such sets of intervals naturally apply to vertices having a temporal feature (\emph{e.g.}, events, dates, or time periods are naturally ordered by the ``arrow of time'') and have hence been exploited for the aggregation of time series~\cite{Jackson04,Pagano13,RLP_TCCI14}.
They might also model unidimensional spacial features, such as the geographical ordering of cities on a coast, or on a transport route~\cite{Rothkopf98}.

The number of intervals of an ordered set of size $n$ is $|\calI(\V)| = \frac{n(n+1)}{2}$.
The resulting number of feasible partitions is $|\frakI(\V)| = 2^{n-1}$~\cite{RLP_MPI14}.

\subsection{From Multigraphs to Multistreams}
\label{ssec:dynamic}

The last step of our generalisation work regards the integration of a temporal dimension within our framework in order to finally deal with the compression of temporal graphs~\cite{Holme13}.
As argued in the introduction of this article, we build on the \emph{link stream} representation of such graphs~\cite{Viard16, Latapy17}.
However, because we also want to deal with cases for which multiple edges are allowed between two vertices at a given time instance, we actually generalise from streams to multistreams, as we did in Subsection~\ref{ssec:unweighted} to generalise from graphs to multigraphs.
\begin{definition}[Directed Multistream]~\\
  A \emph{directed multistream} $\MS = (\V,\T,e)$ is characterised by:
  \begin{itemize}
  \item A set of \emph{vertices} $\V$;
  \item A set of \emph{time instances} $\T \subseteq \mathbb{R}$; 
  \item A multiset of \emph{directed edges} $(\bVxV{\times}\T, e)$\\
    where $e : \bVxV{\times}\T \rightarrow \mathbb{N}$ is the \emph{edge function}, that is the multiplicity function counting the number of edges $e(v,v^\pm,t) \in \mathbb{N}$ going from a given source vertex $v \in \V$ to a given target vertex $v^\pm \in \V$ at a given time instance $t \in \T$.
  \end{itemize}
\end{definition}
In this context, the edge function counts the number of interactions happening between vertices at a given time:
$\ef(v,v^\pm,t)$ is the number of edges going from source vertex $v$ to target vertex $v^\pm$ at time $t$.\\

\begin{figure}
  \centering
  \mycenter{\begin{tikzpicture} [scale=0.8, cell/.style={black!25!red, very thick}]

  \node [align=center] at ($0.5*(0,0,4.5) + 0.25*(0,2,4.5)$) {$v_1$};
  \node [align=center] at ($0.5*(0,0,4.5) + 0.25*(0,4,4.5)$) {$v_2$};
  \node [align=center] at ($0.5*(0,0,4.5) + 0.25*(0,6,4.5)$) {$v_3$};
  \node [align=center] at ($0.5*(0,0,4.5) + 0.25*(0,8,4.5)$) {$v_4$};
  \node [align=center] at ($0.5*(0,0,4.5) + 0.25*(0,10,4.5)$) {$v_5$};
  \node [align=center, rotate=45] at ($0.5*(0,5,0) + 0.25*(0,0,5)$) {$v_1  v_2  v_3  v_4  v_5$\\[1ex]};

  \begin{scope}[scale=0.5]
    \draw [cell] (0,0,0) -- (21,0,0) -- (21,5,0) -- (0,5,0) -- cycle;
    \draw [cell] (0,0,0) -- (21,0,0) -- (21,0,5) -- (0,0,5) -- cycle;
    \draw [cell] (0,0,0) -- (0,0,5) -- (0,5,5) -- (0,5,0) -- cycle;

    \drawblackcell{0}{0}{2}{2}{3}{3}{10};
    \drawblackcell{2}{0}{3}{19}{2}{2}{90};
    \drawblackcell{19}{0}{0}{2}{2}{5}{30};

    \drawblackcell{2}{2}{3}{3}{1}{2}{50};
    \drawblackcell{5}{2}{3}{13}{1}{2}{80};

    \drawblackcell{0}{3}{0}{9}{2}{2}{100};
    \drawblackcell{0}{3}{2}{3}{2}{3}{80};

    \drawblackcell{9}{3}{0}{9}{2}{2}{20};
    \drawblackcell{3}{3}{2}{6}{2}{1}{40};

    \drawblackcell{9}{3}{2}{9}{2}{1}{70};
    \drawblackcell{3}{3}{3}{15}{2}{2}{10};
    
    \drawblackcell{18}{2}{0}{3}{3}{3}{5};
    \drawblackcell{18}{2}{3}{3}{3}{2}{70};

    \draw [cell] (21,5,5) -- (0,5,5) -- (0,0,5) -- (21,0,5) -- cycle;
    \draw [cell] (21,5,5) -- (0,5,5) -- (0,5,0) -- (21,5,0) -- cycle;
    \draw [cell] (21,5,5) -- (21,5,0) -- (21,0,0) -- (21,0,5) -- cycle;
  \end{scope}
    
\end{tikzpicture}

  \caption{Multistream compression, that is the Cartesian partitioning of $\bVxVxT$, consists in the natural tridimensional generalisation of multigraph compression, that is the Cartesian partitioning of $\bVxV$.}
  \label{fig:stream_compression}
\end{figure}
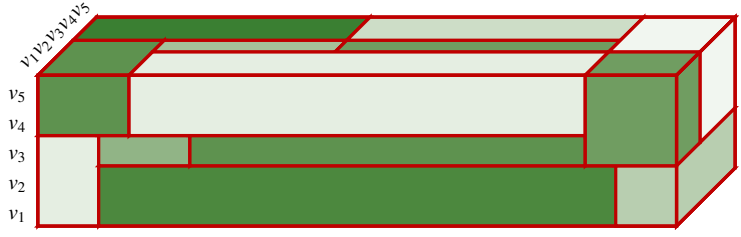

As illustrated in Figure~\ref{fig:stream_compression}, this formalism constitutes an elegant solution to generalise our compression scheme since it simply adds a third dimension $\T \subseteq \mathbb{R}$ to the multigraph's definition.
The GCP is then generalised to this three-dimensional formalism by (i)~defining a set of feasible time subsets that preserves the ordering of time instances, that is a set of intervals $\hatcalP(\T) = \calI(\T)$ (see previous subsection), (ii)~considering three-dimensional Cartesian multiedge subsets $\VxVxT \in \hatcalP(\V){\times}\hatcalP(\V){\times}\calI(\T)$, and (iii)~computing the information loss on this generalised space in a similar fashion than for the static version of the compression problem.
Here is the resulting generalisation in details.
\[
  \rotatebox[origin=c]{90}{\text{see Definition~\ref{def:multigraph}}} \left\{\parbox{0.95\textwidth}{%
      \begin{itemize}[leftmargin=1.5em]

      \item \makebox[3cm][l]{Time instances:} $\c{t \in \T \subseteq \mathbb{R}}$;

      \item \makebox[3cm][l]{Time subsets:} $\c{T \in \calP(\T)}$; 

      \item \makebox[3cm][l]{Multiedges:} $\cvvt \in \cbVxVxT$;

      \item \makebox[3cm][l]{Edge function:} $e : \cbVxVxT \rightarrow \mathbb{N}$;

      \item Compressed edge function:
      \end{itemize}
      $$\ef : \calP(\V){\times}\calP(\V)\c{{\times}\calP(\T)} \rightarrow \mathbb{N} \quad \text{with} \quad \ef (V,V^\pm \c{,T}) = \sum_{\vv \in \VxV} \c{\int_{t \in T}} {\ef \cvvt \; \c{dt}} \text{;}$$
    }\right.
\]

\vspace{-1.5em}

\[
  \rotatebox[origin=c]{90}{\text{see Def.~\ref{def:feasible_partition}}} \left\{\parbox{0.95\textwidth}{%
      \begin{itemize}[leftmargin=1.5em]
      \item Feasible time subsets, that is time intervals: $\c{T = [t_1,t_2] \; \in \; \hatcalP(\T) = \calI(\T) \; \subset \; \calP(\T)}$;

      \item Feasible multiedge subsets: $\cVxVxT \; \in \; \hatcalP(\V){\times}\hatcalP(\V)\c{{\times}\calI(\T)} \; \subset \; \calP(\cbVxVxT)$;
        
      \end{itemize}
    }\right.
\]

\vspace{-1.5em}

\[
  \rotatebox[origin=c]{90}{\text{see Def.~\ref{def:cartesian_partition}}} \left\{\parbox{0.95\textwidth}{%
      \begin{itemize}[leftmargin=1.5em]
      \item Feasible Cartesian multiedge partitions:
      \end{itemize}\vspace{-1.5em}
      \begin{align*}
        \ccalVVT \in \hatfrakP^\times (\bVxV\c{{\times}\T}) = \{ & \{(V_1{\times}V^\pm_1\c{{\times}T_1}),\ldots,(V_m{\times}V^\pm_m\c{{\times}T_m})\} \in \hatcalP(\V){\times}\hatcalP(\V)\c{{\times}\calI(\T)}\\
                                                                             & : \cup_i (V_i{\times}V^\pm_i\c{{\times}T_i}) = \bVxV\c{{\times}\T} \wedge (V_i{\times}V^\pm_i\c{{\times}T_i}) \cap (V_j{\times}V^\pm_j\c{{\times}T_j}) = \emptyset \} \text{;}
      \end{align*}
    }\right.
\]

\vspace{-1.5em}

\[
  \rotatebox[origin=c]{90}{\text{see Definition~\ref{def:empirical}}} \left\{\parbox{0.95\textwidth}{%
      \begin{itemize}[leftmargin=1.5em]
      \item Observed variable:
        $$\cXXX \in \bVxV\c{{\times}\T} \quad \text{with} \quad \c{f}_\cXXX \cvvt = \frac{\ef\cvvt }{\ef (\V,\V, \c{\T}) } \text{;}$$
      \end{itemize}
      }\right.
\]

\vspace{-1.5em}

\[
  \rotatebox[origin=c]{90}{\text{see Definition~\ref{def:compressed}}} \left\{\parbox{0.95\textwidth}{%
      \begin{itemize}[leftmargin=1.5em]
      \item Partition function:
        $$\ccalVVT(v,v^\pm\c{,t}) \text{ is the only multiedge subset } \cVxVxT \text{ in } \ccalVVT \text{ that contains } (v,v^\pm\c{,t}) \text{;}$$

      \item Compressed variable:
        $$\ccalVVT\cXXX \in \ccalVVT \quad \text{with} \quad \c{f}_{\ccalVVT\cXXX} (V,V^\pm \c{,T}) = \frac{\ef (V,V^\pm \c{,T}) }{\ef (\V,\V \c{,\T}) } \text{;}$$
      \end{itemize}
    }\right.
\]

\vspace{-1.5em}

\[
  \rotatebox[origin=c]{90}{\text{see Definition~\ref{def:decompressed}}} \left\{\parbox{0.95\textwidth}{%
      \begin{itemize}[leftmargin=1.5em]
      \item External variable, in the case of a degree-preserving compression:
        $$\cYYY \in \ccalVVT  \quad \text{with} \quad \c{f}_\cYYY (v,v^\pm\c{,t}) = \displaystyle\frac{\ef(v,\V\c{,\T})}{\ef(\V,\V\c{,\T})} \frac{\ef(\V,v^\pm\c{,\T})}{\ef(\V,\V\c{,\T})} \c{\frac{\ef(\V,\V,t)}{\ef(\V,\V,\T)}} \text{;}$$

      \item Decompressed variable:
      \end{itemize}\vspace{-1.5em}
      \begin{multline*}
        \ccalVVT_\cYYY \cXXX \in \bVxV\c{{\times}\T}\\
        \text{with} \quad \c{f}_{\ccalVVT_\cYYY\cXXX} \cvvt = \displaystyle\frac {\ef\cVVT} {\ef(\V,\V\c{,\T})} \displaystyle\frac {\ef(v,\V\c{,\T}) \; \ef(\V,v^\pm\c{,\T}) \; \c{\ef(\V,\V,t)}} {\ef(V,\V\c{,\T}) \; \ef(\V,V^\pm\c{,\T}) \; \c{\ef(\V,\V\c{,T})}}\\
        \text{where\ } \cVxVxT = \ccalVVT\cvvt \text{;}
      \end{multline*}
    }\right.
\]

\vspace{-1.5em}

\[
  \rotatebox[origin=c]{90}{\text{see Definition~\ref{def:information_loss}}} \left\{\parbox{0.95\textwidth}{%
      \begin{itemize}[leftmargin=1.5em]
      \item Information loss, decomposed at the level of multiedge subsets:
      \end{itemize}
      $$\lossf{\cXXX}{\pcVVT}{\cYYY} = \hspace{-3ex} \sum_{\vv \in \VxV} \c{\int_{t \in T}} {\frac{\ef \cvvt}{\ef (\V,\V \c{,\T}) } \log_2 \left( \frac {\ef(v,v^\pm\c{,t})} {\ef(V,V^\pm\c{,T})} \bigg/ \frac {\ef(v,\V\c{,\T}) \; \ef(\V,v^\pm\c{,\T}) \; \c{\ef(\V,\V,t)}} {\ef(V,\V\c{,\T}) \; \ef(\V,V^\pm\c{,\T}) \; \c{\ef(\V,\V,T)}} \right) \c{dt}} \text{.}$$
    }\right.
\]

The use of \emph{integrals} instead of discrete sums to define the edge function, as well as the use of probability \emph{density function} instead of discrete probability distribution to define the random variables, is due to the fact that the added temporal dimension $\T$ is continuous, contrary to the vertex set $\V$.
A simpler setting for temporal graphs would consist in assuming a discrete representation of time, that would hence only require discrete operators.
In this regard, the optimisation algorithm that is introduced in next section to solve the generalised GCP is only provided for the discrete case, for simplicity reasons.

\paragraph{Related Work}

As discussed in the introduction of this article, and as illustrated in this subsection, the recent work on the \emph{link stream} formalism~\cite{Viard16, Latapy17} proposes to deal with time as a simple addition to the graph's structural dimension.
Considering temporal graphs as genuine tridimensional data, the arbitrary separation of structure and time is avoided, thus making the generalisation quite natural.
Note that similar generalisation objectives have been addressed in previous work on graph compression, as for example the application of bidimensional \emph{block models} to multidimensional matrices~\cite{Borgatti92} or the application of \emph{biclustering} to triplets of variables~\cite{Narmadha16}, which has then been exploited for the statistical analysis of temporal graphs~\cite{Guigoures12}.
The particular interest of such approaches also consists in the fact that they provide a unified compression scheme in which structural and temporal information is simultaneously taken into account.


\section{Result: The Lossy Multistream Compression Problem}
\label{sec:MSCP}

This section integrates all generalisations that have been proposed in previous section to the GCP in order to properly formalise the Lossy Multistream Compression Problem (MSCP).
It then proposes an algorithmic solution to this problem by reducing it to the \emph{Set Partitioning Problem}, a well-known combinatorial optimisation problem allowing to exploit state-of-the-art approaches.

\subsection{The Lossy MSCP}

We first formalise the Lossy MSCP by exploiting all notions introduced in previous section.
Note that this instance of the problem has a discrete temporal dimension (for an easier algorithmic solution) and is described for a particular external variable (corresponding to a degree-preserving decompression, see Subsection~\ref{ssec:lossy}).

\begin{definition}[The Lossy Multistream Compression Problem]~\\
  Given:
  \begin{itemize}
  \item A directed multistream $\MS = (\V,\T,e)$,
    where $\T = \llbracket t_\alpha, t_\omega \rrbracket \subseteq \mathbb{N}$ and where the edge function $e : \bVxVxT \rightarrow \mathbb{N}$ determines the observed variable $\XXX \in \bVxVxT$; 
  \item A set of feasible vertex subsets $\hatcalP(\V) \subseteq \calP(\V)$, the set of time intervals $\calI(\T) \subset \calP(\T)$, and the set of feasible Cartesian multiedge partitions $\hatfrakP^\times (\bVxVxT)$;
  \item The external variable $\YYY \in \bVxVxT$ corresponding to a degree-preserving decompression, which determines the information function $\loss : \hatfrakP^\times(\bVxVxT) \rightarrow \mathbb{R}^+$, and an information threshold $\threshold \in \mathbb{R}^+$;
  \end{itemize}
  Find a feasible Cartesian multiedge partition $\calVVT^* \in \hatfrakP^\times (\bVxVxT)$ with minimal size $|\calVVT^*|$ such that $\lossf{\XX}{\calVVT^*}{(\Xout,\Xin,\Utm)} \leq \threshold$:
  $$\calVVT^* \; = \; \argmin_{\substack{\calVVT \in \hatfrakP^\times (\bVxVxT) \\ \lossf{\XX}{\calVVT}{(\Xout,\Xin,\Utm)} \leq \threshold}} \; |\calVVT| \text{.}$$
\end{definition}

\paragraph{Lagrangian Relaxation}

The MSCP is a \emph{constrained} optimisation problem:
One wants to minimise the size of the multiedge partition (objective) while guarantying that the information loss it induces stays below a given threshold (constraint).
Lagrangian relaxation is a classical relaxation method which approximates the constrained problem by a simpler unconstrained one~\cite{Ahuja93}.
It consists in injecting the constraints within the objective, penalising eventual violations of these constraints by using a Lagrange multiplier $\lambda \in \mathbb{R}^+$ that is interpreted as the cost that such violations pose to the objective.
It hence results in an unconstrained optimisation problem which objective is parametrised by the Lagrange multiplier.
In the case of the MSCP, the Lagrange function that mixes the information-theoretic objective and the size constraint is
$$\obj_\lambda(\calVVT) \quad = \quad |\calVVT| \; + \; \lossf{\XX}{\calVVT}{(\Xout,\Xin,\Utm)} \text{,}$$
where $\lambda \in \mathbb{R}^+$ is the Lagrange multiplier.
The relaxed MSCP then simply consists in minimising this parametrised objective:
$$\calVVT^* \; = \; \argmin_{\calVVT \in \hatfrakP^\times (\bVxVxT)} \; \obj_\lambda(\calVVT) \text{.}$$
Intuitively, when $\lambda = 0$, the relaxed MSCP consists in minimising the partition size without any constraint regarding the information loss it induces.
This hence corresponds to a virtually infinite information threshold $\threshold$ for which the \emph{maximal partition} (all multiedges in one subset) is optimal.
When $\lambda \rightarrow \infty$, the objective becomes negligible in comparison to the cost of constraint violation, so the relaxed MSCP simply consists in minimising the information loss.
This hence corresponds to the highest level of constraint ($\tau = 0$) for which the \emph{minimal partition} (each multiedge in its own subset) is optimal.
By varying the Lagrange multiplier $\lambda$ between $0$ and $\infty$, one can then express different levels of constraints and obtain a multiscale sequence of compression results ranging from the minimal to the maximal partition.

\subsection{Reducing the Lossy MSCP to the Set Partitioning Problem}
\label{ssec:reduction}

The \emph{Set Partitioning Problem} (SPP) is a deeply-studied combinatorial optimisation problem that naturally arises as soon as one wants to organise a set of objects into covering and pairwise disjoint subsets such that an additive objective is minimised~\cite{Balas76}.
To that matter, the Lossy MSCP can be expressed as a particular case of the SPP, hence allowing the use of state-of-the-art algorithmic results to address it. 
\begin{definition}[The Set Partitioning Problem]~\\
  Given a set of objects $\pop$, a set of feasible subsets $\hatcalP(\pop) \subseteq \calP(\pop)$, and a cost function $c : \hatcalP(\pop) \rightarrow \mathbb{R}^+$, the \emph{Set Partitioning Problem} consists in finding a minimal feasible partition of $\pop$, that is a set of feasible subsets $\partitionpop \subseteq \hatcalP(\pop)$ that partitions $\pop$ and that minimises the sum of the costs:
  $$\partitionpop^* \quad = \quad \argmin_{\partitionpop \in \hatfrakP(\pop)} \; \sum_{\partpop \in \partitionpop} {c(\partpop)} \text{,}$$
  where $\hatfrakP(\pop) = \{ \{\partpop_1,\ldots,\partpop_m\} \subseteq \hatcalP(\pop) \; : \; \cup_i \partpop_i = \pop \; \wedge \; \forall i \neq j, \; \partpop_i \cap \partpop_j = \emptyset \}$.
\end{definition}
The Lossy MSCP can be expressed as special cases of the SPP:
\begin{itemize}
\item The objects are the triples consisting in two vertices and one time instance: $\pop \; = \; \bVxVxT$;
\item The feasible subsets are the Cartesian products of any two feasible vertex subsets and a time interval\footnote{
    There is again an abuse of notation in the following definition when we write that ``$\hatcalP (\bVxVxT) \; = \; \hatcalP(\V) \times \hatcalP(\V) \times \calI(\T)$''.
    In principle, we should write that ``$\hatcalP (\bVxVxT) \; = \; \{ \VxVxT : \VVT \in \hatcalP(\V) \times \hatcalP(\V) \times \calI (\T)\}$'', but we prefer the former for notation conciseness.
  }:
  $$\hatcalP (\bVxVxT) \; = \; \hatcalP(\V) \times \hatcalP(\V) \times \calI(\T) \text{;}$$
\item The cost function is given by the information loss function defined at the subset level:
  $$c : \hatcalP(\bVxVxT) \rightarrow \mathbb{R}^+ \quad \text{with} \quad c\VVT \; = \; q_\lambda\VVT \; = \; 1 \; +  \; \lambda \lossf{\XX}{\pVVT}{(\Xout,\Xin,\Utm)} \text{.}$$
\end{itemize}

\paragraph{Solving the Set Partitioning Problem}

The SPP is NP-complete in the {general} case~\cite{Chakravarty82}. 
Hence, one cannot hope for a general-purpose algorithm that can efficiently solve every instance of the SPP (except if P = NP).
However, it as been shown that some restrictive, yet meaningful constraints on the set of feasible subsets induce tractable versions of the SPP.
Hence, among the many algorithmic strategies that have been proposed in the literature to tackle the problem (see~\cite{RLP_MPI14} for an extensive review of such work), we are interested in this article in approaches that (i)~take advantage of the algebraic properties of partition lattices (see \ref{sec:spp}) to cleverly search the solution space for an optimal solution, and that moreover (ii)~take advantage of the structure of constraints exerting on the set of feasible subsets (see Subsection~\ref{ssec:constrained}) to reduce the size of the solution space and to speed up the search.

Regarding~(i), approaches based on dynamic programming have been proposed in the past to solve the SPP in its most general formulation, leading to a $\Omega(2^n)$ and $O(3^n)$ optimisation algorithm~\cite{Rothkopf98,Sandholm02}, where $n = |\pop|$ is the number of objects.
However, regarding~(ii), it has been shown that more efficient algorithms are possible when dealing with special versions of the problem, that is when the set of feasible subsets has a more specific structure that can also be exploited by dynamic programming. 
For example, when the set of feasible subsets forms a hierarchy $\calH(\pop)$ (see Subsection~\ref{ssec:constrained}), then an appropriate depth-first search of the tree representing the hierarchy allows for a $O(n)$ optimisation algorithm~\cite{Pons11,RLP_TCCI14}.
When feasible subsets forms a set of intervals $\calI(\pop)$, then solving the SPP is equivalent to solving the \emph{shortest path problem}~\cite{Chakravarty82} and several $O(n^2)$ dynamic programming algorithms have been proposed in the past for this particular case~\cite{Anily91,Vidal93,Rothkopf98,Jackson04}.

In \ref{sec:spp}, we present a general algorithmic framework that has been proposed in previous work to solve such special versions of the SPP. 
The key principle of this framework, and of dynamic programming in general, is the following:
(i)~The solution space is first broken down into smaller covering subspaces;
(ii)~Then, thanks to a \emph{principle of optimality} that fits with the algebraic structure of the partition set, these subproblems are recursively solved;
(iii)~Locally-optimal solutions are then compared to globally solve the initial problem.
More precisely, the principle of optimality builds on the following observation:
For each feasible subset $X \in \hatcalP(\pop)$, by first computing a feasible optimal partition $\calX^* \in \frakP(X)$ on this subset, one can use this optimal partition to efficiently evaluate all the partitions that are coarser than $X$, that is all partitions that contain a subset that contains $X$.
This allows to identify and suppress numerous redundant computations when searching the solution space (see detailed definitions and proofs in \ref{sec:spp}, in particular for the definition of refinements $\frakR$ and covering partitions $\frakC$).\\ 

\pagebreak

Here is the resulting high-level algorithm:

\begin{algorithm}[Recursive Algorithm to Solve the SPP~\cite{RLP_ICTAI14, RLP_MPI14}]~\\[1ex]
  \textbf{Global:}\vspace{-1ex}
  \begin{itemize}[itemsep=0pt]
  \item A set of objects $\pop$;
  \item A set of feasible subsets $\hatcalP(\pop) \subseteq \calP(\pop)$;
  \item A cost function $c : \hatcalP(\pop) \rightarrow \mathbb{R}^+$.
  \end{itemize}
  \textbf{Input:} A feasible subset $X \in \hatcalP(\pop)$.\\[1ex]
  \textbf{Output:} A locally-optimal feasible partition $\calX^* \in \hatfrakP^*(X)$.\\[1ex]
  \textbf{Algorithm:}\vspace{-1ex}
  \begin{itemize}[itemindent=0.5cm]
  \item[(step 1)] If the algorithm has already been called on $X$, return the recorded locally-optimal partition and its cost.
  \item[(step 2)] Compute the set $\hatfrakC(\{X\})$ of feasible partitions that are covered by $\{X\}$.
  \item[(step 3)] For each such partition $\calX \in \hatfrakC(\{X\})$, do the following:
	\begin{itemize}[itemsep=0.05cm,topsep=0.05cm,itemindent=0.7cm]
    \item[(step 3.a)] For each subset ${X^\pm} \in \calX$, recursively compute a locally-optimal feasible partition $\calX_{X^\pm}^* \in \hatfrakP^*({X^\pm})$;
    \item[(step 3.b)] Compute the union $\calX^* = \bigcup_{{X^\pm} \in \calX} {\calX_{X^\pm}^*}$ of these partitions.
      The principle of optimality ensures that $\calX^* \in \hatfrakR^*(\calX)$.
	\end{itemize}
  \item[(step 4)] Record and return a partition that minimises the cost function $c$ among the maximal partition $\{X\}$ and all the partitions $\calX^*$ that have been computed. 
  \end{itemize}
  \label{algo:generic_algo}
\end{algorithm}

\subsection{Solving the Lossy MSCP}

We now precise and apply the high-level SPP algorithm introduced in previous subsection to solve the Lossy MSCP, by giving more details about its specialisation.
In particular, the algorithm above does neither precise how the set of feasible subsets $\hatcalP(\bVxVxT)$ should be represented in memory, nor how the computation of feasible covered partitions $\hatfrakC(\{\VxVxT\})$ and the computation of the cost function $c(\VxVxT)$ should be implemented in practice.
In what follows, we provide an efficient data structure to do so and a detailed implementation of the resulting combinatorial optimisation algorithm.

\begin{figure}[b!]
  \centering
  \begin{minipage}{0.35\textwidth}
    \centering
    \begin{tikzpicture}[scale=0.5]

  \coordinate (pos) at (0,0);
  \tikzpart{1234}{p1234}

  \coordinate (pos) at (-1.25,-2);
  \tikzpart{123}{p123}

  \coordinate (pos) at (2.5,-4);
  \tikzpart{34}{p34}

  \coordinate (pos) at (-3.75,-6);
  \tikzpart{1}{p1}

  \coordinate (pos) at (-1.25,-6);
  \tikzpart{2}{p2}

  \coordinate (pos) at (1.25,-6);
  \tikzpart{3}{p3}

  \coordinate (pos) at (3.75,-6);
  \tikzpart{4}{p4}

  \node [draw, circle, thick, minimum width=2pt, inner sep=0pt] (l1234i1) at ($(p1234i1) - (0,0.75)$) {};
  \draw [toInLink] (p1234i1) -- (l1234i1);
  \draw [toOutLink] (l1234i1) -- (p123);
  \draw [toOutLink] (l1234i1) to [bend left=45] (p4);

  \node [draw, circle, thick, minimum width=2pt, inner sep=0pt] (l1234i4) at ($(p1234i4) - (0,2.25)$) {};
  \draw [toInLink] (p1234i4) -- (l1234i4);
  \draw [toOutLink] (l1234i4) to [bend right=20] (p1);
  \draw [toOutLink] (l1234i4) to [bend right=20] (p2);
  \draw [toOutLink] (l1234i4) -- (p34);

  \node [draw, circle, thick, minimum width=2pt, inner sep=0pt] (l123) at ($(p123) - (0,1.5)$) {};
  \draw [toInLink] (p123) -- (l123);
  \draw [toOutLink] (l123) -- (p1);
  \draw [toOutLink] (l123) -- (p2);
  \draw [toOutLink] (l123) -- (p3);
  
  \node [draw, circle, thick, minimum width=2pt, inner sep=0pt] (l34) at ($(p34p1) - (0,0.75)$) {};
  \draw [toInLink] (p34p1) -- (l34);
  \draw [toOutLink] (l34) -- (p3);
  \draw [toOutLink] (l34) -- (p4);
\end{tikzpicture}

    \vskip2ex
    a.~Arbitrary set
  \end{minipage} \begin{minipage}{0.43\textwidth}
    \centering
    \begin{tikzpicture}[scale=0.5]

  \coordinate (pos) at (0,0);
  \tikzpart{1234}{p1234}

  \coordinate (pos) at (-1.5,-2);
  \tikzpart{123}{p123}

  \coordinate (pos) at (1.5,-2);
  \tikzpart{234}{p234}

  \coordinate (pos) at (-2.5,-4);
  \tikzpart{12}{p12}

  \coordinate (pos) at (0,-4);
  \tikzpart{23}{p23}

  \coordinate (pos) at (2.5,-4);
  \tikzpart{34}{p34}

  \coordinate (pos) at (-3.75,-6);
  \tikzpart{1}{p1}

  \coordinate (pos) at (-1.25,-6);
  \tikzpart{2}{p2}

  \coordinate (pos) at (1.25,-6);
  \tikzpart{3}{p3}

  \coordinate (pos) at (3.75,-6);
  \tikzpart{4}{p4}

  \node [draw, circle, thick, minimum width=2pt, inner sep=0pt] (l1234i1) at ($(p1234i1) - (0,0.75)$) {};
  \draw [toInLink] (p1234i1) -- (l1234i1);
  \draw [toOutLink] (l1234i1) -- (p234);
  \draw [toOutLink] (l1234i1) to [bend right=45] (p1);

  \node [draw, circle, thick, minimum width=2pt, inner sep=0pt] (l1234) at ($(p1234) - (0,2)$) {};
  \draw [toInLink] (p1234) -- (l1234);
  \draw [toOutLink] (l1234) to [bend left=10] (p12);
  \draw [toOutLink] (l1234) to [bend right=10] (p34);

  \node [draw, circle, thick, minimum width=2pt, inner sep=0pt] (l1234i4) at ($(p1234i4) - (0,0.75)$) {};
  \draw [toInLink] (p1234i4) -- (l1234i4);
  \draw [toOutLink] (l1234i4) -- (p123);
  \draw [toOutLink] (l1234i4) to [bend left=45] (p4);

  \node [draw, circle, thick, minimum width=2pt, inner sep=0pt] (l123i1) at ($(p123i1) - (0,0.75)$) {};
  \draw [toInLink] (p123i1) -- (l123i1);
  \draw [toOutLink] (l123i1) to [bend right=45] (p1);
  \draw [toOutLink] (l123i1) -- (p23);
  
  \node [draw, circle, thick, minimum width=2pt, inner sep=0pt] (l123i3) at ($(p123i3) - (0,0.75)$) {};
  \draw [toInLink] (p123i3) -- (l123i3);
  \draw [toOutLink] (l123i3) -- (p12);
  \draw [toOutLink] (l123i3) to [bend left=45] (p3);

  \node [draw, circle, thick, minimum width=2pt, inner sep=0pt] (l234i1) at ($(p234i1) - (0,0.75)$) {};
  \draw [toInLink] (p234i1) -- (l234i1);
  \draw [toOutLink] (l234i1) to [bend right=45] (p2);
  \draw [toOutLink] (l234i1) -- (p34);
  
  \node [draw, circle, thick, minimum width=2pt, inner sep=0pt] (l234i3) at ($(p234i3) - (0,0.75)$) {};
  \draw [toInLink] (p234i3) -- (l234i3);
  \draw [toOutLink] (l234i3) -- (p23);
  \draw [toOutLink] (l234i3) to [bend left=45] (p4);

  \node [draw, circle, thick, minimum width=2pt, inner sep=0pt] (l12) at ($(p12p1) - (0,0.75)$) {};
  \draw [toInLink] (p12p1) -- (l12);
  \draw [toOutLink] (l12) -- (p1);
  \draw [toOutLink] (l12) -- (p2);

  \node [draw, circle, thick, minimum width=2pt, inner sep=0pt] (l23) at ($(p23p1) - (0,0.75)$) {};
  \draw [toInLink] (p23p1) -- (l23);
  \draw [toOutLink] (l23) -- (p2);
  \draw [toOutLink] (l23) -- (p3);

  \node [draw, circle, thick, minimum width=2pt, inner sep=0pt] (l34) at ($(p34p1) - (0,0.75)$) {};
  \draw [toInLink] (p34p1) -- (l34);
  \draw [toOutLink] (l34) -- (p3);
  \draw [toOutLink] (l34) -- (p4);
\end{tikzpicture}

    \vskip2ex
    b.~Set of intervals
  \end{minipage} \begin{minipage}{0.20\textwidth}
    \centering
    \begin{tikzpicture}[scale=0.5]

  \coordinate (pos) at (0,0);
  \tikzpart{1234}{p1234}

  \coordinate (pos) at (-1.5,-4);
  \tikzpart{12}{p12}

  \coordinate (pos) at (1.5,-4);
  \tikzpart{34}{p34}

  \coordinate (pos) at (-2.25,-6);
  \tikzpart{1}{p1}

  \coordinate (pos) at (-0.75,-6);
  \tikzpart{2}{p2}

  \coordinate (pos) at (0.75,-6);
  \tikzpart{3}{p3}

  \coordinate (pos) at (2.25,-6);
  \tikzpart{4}{p4}

  \node [draw, circle, thick, minimum width=2pt, inner sep=0pt] (l1234) at ($(p1234) - (0,1.5)$) {};
  \draw [toInLink] (p1234) -- (l1234);
  \draw [toOutLink] (l1234) -- (p12);
  \draw [toOutLink] (l1234) -- (p34);

  \node [draw, circle, thick, minimum width=2pt, inner sep=0pt] (l12) at ($(p12p1) - (0,0.75)$) {};
  \draw [toInLink] (p12p1) -- (l12);
  \draw [toOutLink] (l12) -- (p1);
  \draw [toOutLink] (l12) -- (p2);

  \node [draw, circle, thick, minimum width=2pt, inner sep=0pt] (l34) at ($(p34p1) - (0,0.75)$) {};
  \draw [toInLink] (p34p1) -- (l34);
  \draw [toOutLink] (l34) -- (p3);
  \draw [toOutLink] (l34) -- (p4);
\end{tikzpicture}

    \vskip2ex
    c.~Hierarchy
  \end{minipage}
  \caption{
    Poset structures representing different sets of feasible vertex subsets $\hatcalP(\V)$.
    Each node represents a feasible vertex subset $V \in \hatcalP(\V)$ and each one-to-many link represents a feasible covered partition $\{V_1,\ldots,V_k\} \in \hatfrakC(\{V\})$.
  }
  \label{fig:data_structure}
\end{figure}
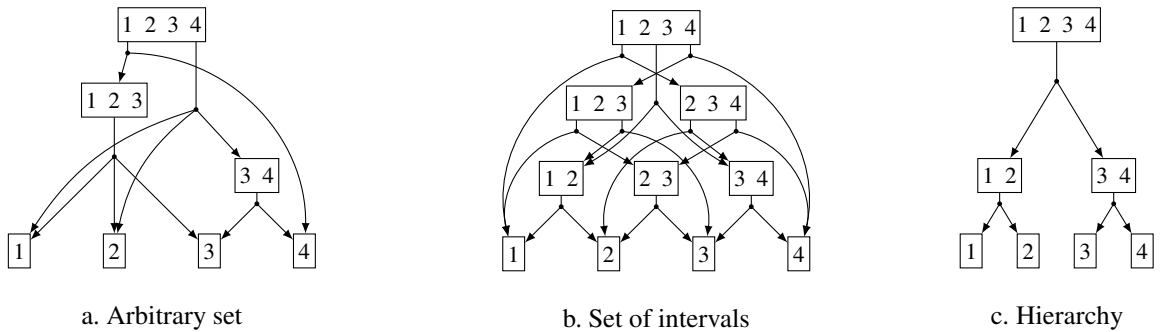

\paragraph{Data structure to represent the set of feasible vertex subsets}

In general, one can use a poset structure with one-to-many links to represent a set of feasible vertex subset $\hatcalP(\V)$ along with its covering relation~$\sqsubset$:
Each node of this data structure represents a feasible vertex subset $V \in \hatcalP(\V)$ and each link represents a covering relation $\{V\} \sqsupset \{V_1,\ldots,V_k\}$ going from one node $V$ to multiple nodes $\{V_1,\ldots,V_k\} \in \hatfrakC(\{V\})$, thus representing a feasible vertex partition that is covered by $\{V\}$.
Figure~\ref{fig:data_structure} gives an example of such poset structures in the case of (a)~an arbitrary set of feasible vertex subsets, (b)~a set of vertex intervals, and (c)~a hierarchy of vertex subsets (see Subsection~\ref{ssec:constrained}).

In what follows, we mark $|\hatfrakC(\V)|$ the sum of the numbers of vertex subsets for all covered partitions encoded in the poset structure, that is the total number of in-coming links to be encoded:
$$|\hatfrakC(\V)| \quad = \quad \sum_{\substack{V \in \hatcalP(\V) \\ \calV \in \hatfrakC(\{V\})}} {|\calV|} \text{.}$$
Note that, in particular cases such as (b) and (c), more straightforward data structures have been proposed in the literature:
For example, using a triangular matrix in the case of a set of intervals and using a simple poset in the case of a hierarchy (with one-to-one links).
This would allow to avoid the explicit representation of the covering relation through the use of one-to-many links:
\emph{E.g.}, memory requirement in $\Theta(n^2)$ instead of $\Theta(n^3)$ to represent the covering relations within a set of intervals by using a triangular matrix instead of a poset structure~\cite{RLP_MPI14}.
However, as the objective is to be here as general as possible, we do not enter in the details of such specialised implementations.

\begin{figure}[b!]
  \centering
  \begin{tikzpicture}[scale=0.5]

  \coordinate (pos) at (0,4.5);
  \tikzpart{12}{p12}

  \coordinate (pos) at (-2,3.25);
  \tikzpart{1}{p1}

  \coordinate (pos) at (2,3.25);
  \tikzpart{2}{p2}

  \node [draw, circle, thick, minimum width=2pt, inner sep=0pt] (l12) at ($(p12p1) - (0,0.75)$) {};
  \draw [toInLink] (p12p1) -- (l12);
  \draw [toOutLink] (l12) -- (p1);
  \draw [toOutLink] (l12) -- (p2);

  \coordinate (pos) at (-4.5,0);
  \tikzpart{12}{p34}

  \coordinate (pos) at (-3.25,2);
  \tikzpart{1}{p3}

  \coordinate (pos) at (-3.25,-2);
  \tikzpart{2}{p4}

  \node [draw, circle, thick, minimum width=2pt, inner sep=0pt] (l34) at ($(p34p1) + (0.75,0)$) {};
  \draw [toInLink] (p34p1) -- (l34);
  \draw [toOutLink] (l34) -- (p3);
  \draw [toOutLink] (l34) -- (p4);

  \coordinate (pos11) at (-2,2);
  \draw [blue, thick, densely dashed] ($(pos11) + (-0.5,-0.5)$) rectangle ($(pos11) + (0.5,0.5)$);
  \node [blue] (p11i11) at ($(pos11) + (0,0)$) {$(1,1)$};

  \coordinate (pos112) at (-2,0);
  \draw [blue, thick, densely dashed] ($(pos112) + (-0.5,-1)$) rectangle ($(pos112) + (0.5,1)$);
  \node [blue] (p112i21) at ($(pos112) + (0,0.5)$) {$(1,1)$};
  \node [blue] (p112i11) at ($(pos112) + (0,-0.5)$) {$(2,1)$};

  \coordinate (pos12) at (-2,-2);
  \draw [blue, thick, densely dashed] ($(pos12) + (-0.5,-0.5)$) rectangle ($(pos12) + (0.5,0.5)$);
  \node [blue] (p12i12) at ($(pos12) + (0,0)$) {$(2,1)$};

  \coordinate (pos121) at (0,2);
  \draw [thick, densely dashed] ($(pos121) + (-1,-0.5)$) rectangle ($(pos121) + (1,0.5)$);
  \node (p121i11) at ($(pos121) + (-0.5,0)$) {$(1,1)$};
  \node (p121i21) at ($(pos121) + (0.5,0)$) {$(1,2)$};

  \coordinate (pos1212) at (0,0);
  \draw [blue, thick, densely dashed] ($(pos1212) + (-1,-1)$) rectangle ($(pos1212) + (1,1)$);
  \node [blue] (p1212i11) at ($(pos1212) + (-0.5,0.5)$) {$(1,1)$};
  \node [blue] (p1212i21) at ($(pos1212) + (0.5,0.5)$) {$(1,2)$};
  \node [blue] (p1212i12) at ($(pos1212) + (-0.5,-0.5)$) {$(2,1)$};
  \node [blue] (p1212i22) at ($(pos1212) + (0.5,-0.5)$) {$(2,2)$};

  \coordinate (pos122) at (0,-2);
  \draw [thick, densely dashed] ($(pos122) + (-1,-0.5)$) rectangle ($(pos122) + (1,0.5)$);
  \node (p122i11) at ($(pos122) + (-0.5,0)$) {$(2,1)$};
  \node (p122i21) at ($(pos122) + (0.5,0)$) {$(2,2)$};

  \coordinate (pos21) at (2,2);
  \draw [thick, densely dashed] ($(pos21) + (-0.5,-0.5)$) rectangle ($(pos21) + (0.5,0.5)$);
  \node (p21i21) at ($(pos21) + (0,0)$) {$(1,2)$};

  \coordinate (pos212) at (2,0);
  \draw [blue, thick, densely dashed] ($(pos212) + (-0.5,-1)$) rectangle ($(pos212) + (0.5,1)$);
  \node [blue] (p212i21) at ($(pos212) + (0,0.5)$) {$(1,2)$};
  \node [blue] (p212i11) at ($(pos212) + (0,-0.5)$) {$(2,2)$};

  \coordinate (pos22) at (2,-2);
  \draw [thick, densely dashed] ($(pos22) + (-0.5,-0.5)$) rectangle ($(pos22) + (0.5,0.5)$);
  \node (p22i22) at ($(pos22) + (0,0)$) {$(2,2)$};

  \node [draw, circle, thick, minimum width=2pt, inner sep=0pt] (l121) at ($(pos121) + (0,0.75)$) {};
  \draw [toInLink] ($(pos121) + (0,0.5)$) -- (l121);
  \draw [toOutLink] (l121) to [bend right=10] ($(pos11) + (0.5,0.5)$);
  \draw [toOutLink] (l121) to [bend left=10] ($(pos21) + (-0.5,0.5)$);

  \node [blue, draw, circle, thick, minimum width=2pt, inner sep=0pt] (l1212a) at ($(pos1212) + (0,1.25)$) {};
  \draw [blue, toInLink] ($(pos1212) + (0,1)$) -- (l1212a);
  \draw [blue, toOutLink] (l1212a) to [bend right=10] ($(pos112) + (0.5,1)$);
  \draw [blue, toOutLink] (l1212a) to [bend left=10] ($(pos212) + (-0.5,1)$);


  \node [draw, circle, thick, minimum width=2pt, inner sep=0pt] (l122) at ($(pos122) + (0,0.75)$) {};
  \draw [toInLink] ($(pos122) + (0,0.5)$) -- (l122);
  \draw [toOutLink] (l122) to [bend right=10] ($(pos12) + (0.5,0.5)$);
  \draw [toOutLink] (l122) to [bend left=10] ($(pos22) + (-0.5,0.5)$);

  \node [blue, draw, circle, thick, minimum width=2pt, inner sep=0pt] (l112) at ($(pos112) - (0.75,0)$) {};
  \draw [blue, toInLink] ($(pos112) - (0.5,0)$) -- (l112);
  \draw [blue, toOutLink] (l112) to [bend left=10] ($(pos11) - (0.5,0.5)$);
  \draw [blue, toOutLink] (l112) to [bend right=10] ($(pos12) - (0.5,-0.5)$);

  \node [draw, circle, thick, minimum width=2pt, inner sep=0pt] (l1212b) at ($(pos1212) - (1.25,0)$) {};
  \draw [toInLink] ($(pos1212) - (1,0)$) -- (l1212b);
  \draw [toOutLink] (l1212b) to [bend left=10] ($(pos121) - (1,0.5)$);
  \draw [toOutLink] (l1212b) to [bend right=10] ($(pos122) - (1,-0.5)$);
  

  \node [draw, circle, thick, minimum width=2pt, inner sep=0pt] (l212) at ($(pos212) - (0.75,0)$) {};
  \draw [toInLink] ($(pos212) - (0.5,0)$) -- (l212);
  \draw [toOutLink] (l212) to [bend left=10] ($(pos21) - (0.5,0.5)$);
  \draw [toOutLink] (l212) to [bend right=10] ($(pos22) - (0.5,-0.5)$);

  \node at (4.25,0) {\LARGE $\longrightarrow$};

  \coordinate (posbis) at (7,0);
  \draw [blue, thick, densely dashed] ($(posbis) + (-1,-1)$) rectangle ($(posbis) + (1,1)$);
  \draw [blue, thick, densely dashed] ($(posbis) + (-1,0)$) -- ($(posbis) + (0,0)$);
  \draw [blue, thick, densely dashed] ($(posbis) + (0,-1)$) -- ($(posbis) + (0,1)$);
  \node [blue] (pbi11) at ($(posbis) + (-0.5,0.5)$) {$(1,1)$};
  \node [blue] (pbi21) at ($(posbis) + (0.5,0.5)$) {$(1,2)$};
  \node [blue] (pbi12) at ($(posbis) + (-0.5,-0.5)$) {$(2,1)$};
  \node [blue] (pbi22) at ($(posbis) + (0.5,-0.5)$) {$(2,2)$};

\end{tikzpicture}

  \caption{
    Bidimensional poset structure representing a set of feasible multiedge subsets $\hatcalP(\bVxV)$.
    Solid rectangles on the left and on the top represent the two unidimensional poset structures $\calP(\V) \times \calP(\V)$.
    Each dashed rectangle represents a feasible multiedge subset $\VV \in \hatcalP(\bVxV)$ and each one-to-many link represents a feasible covered partition $\{V_1{\times}V^\pm_1, \ldots, V_k{\times}V^\pm_k\} \in \hatfrakC(\{\VV\})$.
    The nodes and the links in blue form a rooted subposet that represents a feasible multiedge partition (represented on the right).
  }
  \label{fig:data_bistructure}
\end{figure}

\paragraph{Data structure to represent the set of feasible multiedge subsets}

In the case of a Cartesian multiedge partition $\calVVT \in \hatfrakP^\times(\bVxVxT)$ (see Subsection~\ref{ssec:powergraph}), given the data structure hereabove mentioned to represent the set of feasible \emph{vertex subsets} $\hatcalP(\V)$ and the set of feasible \emph{time intervals} $\calI(\T)$, one can build a similar data structure to represent the set of feasible \emph{multiedge subsets} $\hatcalP(\bVxVxT)$ by generalising to a tridimensional poset structure (see a bidimensional example in Figure~\ref{fig:data_bistructure}).
Each node in this tridimensional poset structure represents the Cartesian product of three nodes in the unidimensional poset structures:
$$\hatcalP(\bVxVxT) \quad = \quad \hatcalP(\V) \times \hatcalP(\V) \times \calI(\T) \text{.}$$
The number of nodes in the tridimensional poset structure is hence the product of the number of nodes in the unidimensional poset structures:
\begin{equation}
  \label{eq:node_number}
  |\hatcalP(\bVxVxT)| \quad = \quad |\hatcalP(\V)| \times |\hatcalP(\V)| \times |\calI(\T)| \text{.}
\end{equation}
The one-to-many links representing the covering relations are then obtained by copying, for each node in the tridimensional poset structure, the corresponding links of the unidimensional poset structures \emph{along all dimensions}, that is the links associated with the three nodes of the corresponding Cartesian product:
\begin{align*}
  \hatfrakC(\{\VxVxT\}) \quad = \quad & \{\{\VxVxT : V \in \calV\} : \calV \in \hatfrakC(\{V\})\}\\
  \cup \; & \{\{\VxVxT : V^\pm \in \calV^\pm\} : \calV^\pm \in \hatfrakC(\{V^\pm\})\}\\
  \cup \; & \{\{\VxVxT : T \in \calT\} : \calT \in \hatfrakC(\{T\})\} \text{.}
\end{align*}
The number of in-coming links in the tridimensional poset structure is hence given by the following formula:
\begin{align}
  \label{eq:link_number}
  \begin{split}
    |\hatfrakC(\bVxVxT)| \quad = \quad & |\hatfrakC(\V)| \; |\hatcalP(\V)| \; |\calI(\T)|\\
    + \; & |\hatcalP(\V)| \; |\hatfrakC(\V)| \; |\calI(\T)|\\
    + \; & |\hatcalP(\V)| \; |\hatcalP(\V)| \; |\hatfrakC(\T)| \text{.}
  \end{split}
\end{align}
Table~\ref{tab:data_structure} then exploits Equations~\ref{eq:node_number} and~\ref{eq:link_number}, as well as the basic combinatorial analysis presented in Subsection~\ref{ssec:constrained}, to present the size of the resulting unidimensional and tridimensional poset structures (in terms of nodes and in terms of links) depending on the types of feasibility constraints applying on the set of vertex subset.

\paragraph{Representing a feasible multiedge partition}

Within such a poset structure, a feasible multiedge partition can be simply represented as a rooted subposet, that is a poset which nodes and links are contained in the original poset and such that its maximal element is the same as the one of the original poset.
This amount in expressing the partition as a sequence of ``division steps'' going from the root of the poset structure (its maximal element) down to the nodes that represent the multiedge subsets belonging to the represented partition.
Hence, to do so, one only needs to record, for each node in this sequence, the next link to follow, or no link at all if one arrived at a subset that belongs to the encoded partition.
An example of such representation is given in blue in Figure~\ref{fig:data_bistructure} (in the case of a bidimensional poset structure).

\paragraph{Recursive computation of the cost function} 

In addition of being decomposable from partitions to their subsets (see Definition~\ref{def:information_loss}), the information loss function defined in Section~\ref{ssec:lossy} can also be decomposed into aggregative measures on these subsets, hence allowing for a recursive computation:
$$\lossf{\XX}{\pVVT}{(\Xout,\Xin,\Utm)} \; = \; \info\; - \; \suma\VVT \log_2 \left( \frac{\suma\VVT} {\suma_1\VVT \; \suma_2\VVT \; \suma_3\VVT} \right) \text{,}$$
where
$$\info \; = \; \displaystyle\sum_{(v,v^\pm,t) \in \VxVxT} {\frac{e(v,v^\pm,t)}{e(\V,\V,\T)} \log_2 \left( \frac{e(v,v^\pm,t)} {e(\V,\V,\T)} \bigg/ \frac{\ef(v,\V,\T) \; \ef(\V,v^\pm,\T) \; \ef(\V,\V,t)} {e(\V,\V,\T) \; e(\V,\V,\T) \; e(\V,\V,\T)} \right)}$$
is a constant term that does not depend on $\VVT$ and 
\begin{align*}
  \suma\VVT \quad = \quad \frac{e\VVT}{e(\V,\V,\T)} \text{,} \qquad & \suma_1\VVT \quad = \quad \frac{e(V,\V,\T)}{e(\V,\V,\T)} \text{,}\\
  \suma_2\VVT \quad = \quad \frac{e(\V,V^\pm,\T)}{e(\V,\V,\T)} \text{,} \qquad & \suma_3\VVT \quad = \quad \frac{e(\V,\V,T)}{e(\V,\V,\T)} \text{,}
\end{align*}
are all additive, in the sense that,
when applied to a multiedge subset $\VxVxT$, they can be expressed as the sum of measures applied to a set of multiedge subsets that forms a partition of $\VxVxT$:
$$\{ V_1{\times}V^\pm_1{\times}T_1, \ldots, V_k{\times}V^\pm_k{\times}T_k \} \; \in \; \frakP(\VxVxT) \quad \Rightarrow \quad \suma_k\VVT \; = \; \sum_i \suma_k (V_i,V^\pm_i,T_i) \text{.}$$
Hence, by first computing these measures for minimal elements in the aforementioned tridimensional poset structure, 
then browsing the poset in a bottom-up fashion, one can efficiently compute the values of the cost function for all feasible multiedge subsets it contains.
The information loss induced by a multiedge partition is then given by the sum of these values, to which is added the constant term (info) which is pre-computed once and for all.

\paragraph{Resulting optimisation algorithm}

The resulting recursive algorithm solving the Lossy MSCP is detailed below.
It is obtained by applying high-level Algorithm~1 to the poset structure we just described.
Hence, to solve the Lossy MSCP, one first needs (i)~to build this data structure, then (ii)~to compute the cost of each node it contains by applying the recursive approach above, and finally (iii)~to run Algorithm~2 on its maximal element.

\begin{table}
    \centering\small
    \begin{tabular}{|p{3cm}|p{1.5cm}p{2cm}|p{1.5cm}p{2cm}|}
      \hline
      ~ \newline For $|\V| = n$ vertices, \newline $\hatcalP(\T) = \calI(\T)$, & \multicolumn{2}{p{3.5cm}|}{\textbf{Number of nodes} \newline \small Number of \emph{vertex} subsets \newline or \emph{multiedge} subsets} & \multicolumn{2}{p{3.5cm}|}{\textbf{Number of links} \newline \small Number of \emph{vertex} coverings \newline or \emph{multiedge} coverings} \\[3ex]
      and $\hatcalP(\V) = \ldots$ & $|\hatcalP(\V)|$ & $|\hatcalP(\bVxVxT)|$ & $|\hatfrakC(\V)|$ & $|\hatfrakC(\bVxVxT)|$\\[1ex] \hline
      $\calP(\V)$ (complete set) \rule{0pt}{1\normalbaselineskip} & $\Theta(2^n)$ & $\Theta(4^n)$ & $\Theta(3^n)$ & $\Theta(6^n)$\\[1ex]
      $\calI(\V)$ (set of intervals) & $\Theta(n^2)$ & $\Theta(n^6)$ & $\Theta(n^3)$ & $\Theta(n^7)$\\[1ex]
      $\calH(\V)$ (hierarchy) & $\Theta(n)$ & $\Theta(n^4)$ & $\Theta(n)$ & $\Theta(n^5)$\\[1ex] \hline
    \end{tabular}
    \caption{Size of the poset structure in terms of nodes (columns~1 and~2) and in terms of in-coming links (columns~3 and~4) for a unidimensional poset structure (columns~1 and~3) and for a tridimensional poset structure (columns~2 and~4) when different types of feasibility constraints apply to the set of vertex subsets (rows).
      The set of feasible time subsets is here assumed to be the set of time intervals: $\hatcalP(\T) = \calI(\T) \; \Rightarrow \; |\hatcalP(\T)| = \Theta(n) \; \text{and} \; |\hatfrakC(\T)| = \Theta(n^3)$.}
    \label{tab:data_structure}
  \end{table}

\begin{figure}
  \begin{algorithm}[Recursive Algorithm to Solve the Lossy MSCP]~\\[1ex]
    \textbf{Global:} A tridimensional poset structure with one-to-many links such that, given a {\tt node} in this structure:\vspace{-1ex}
    \begin{itemize}[itemsep=0pt]
    \item {\tt node} represents a feasible multiedge subset $\VxVxT \in \hatcalP(\bVxVxT)$;
    \item {\tt node.links} is a list of one-to-many links such that, given a {\tt link} in this list:\vspace{-1ex}
      \begin{itemize}[itemsep=0pt]
      \item {\tt link} represents a feasible multiedge partition $\calVVT \in \hatfrakC(\{\VxVxT\})$ covered by $\{\VxVxT\}$;
      \item {\tt link.children} is a list of nodes in the tridimensional poset structure representing the feasible multiedge subsets $\{V_1{\times}V^\pm_1{\times}T_1, \ldots, V_k{\times}V^\pm_k{\times}T_k\}$ in the covered partition $\calVVT$;
      \end{itemize}
    \item {\tt node.cost} is a positive float value representing the Lagrange version $q_\lambda(\VxVxT) \in \mathbb{R}^+$ of the information loss induced by $\VxVxT$;
    \item {\tt node.optimal\_link} is one of the links in {\tt node.links} encoding a locally-optimal feasible multiedge partition $\calVVT^* \in \hatfrakP^*(\VxVxT)$ (it is set to {\tt NULL} before the algorithm starts);
    \item {\tt node.optimal\_cost} is a positive float value representing the Lagrange version $q_\lambda(\calVVT^*) \in \mathbb{R}^+$ of the information loss induced by the locally-optimal partition $\calVVT^*$ (it is set to {\tt NULL} before the algorithm starts).
    \end{itemize}
    \textbf{Input:}\vspace{-1ex}
    \begin{itemize}[itemsep=0pt]
    \item A {\tt node} in the tridimensional poset structure.
    \end{itemize}
    \textbf{Output:}\vspace{-1ex}
    \begin{itemize}[itemsep=0pt]
    \item Computes a locally-optimal feasible multiedge subset for {\tt node};
    \item Sets {\tt node.optimal\_link} accordingly;
    \item Returns {\tt node.optimal\_cost}.
    \end{itemize}
    \textbf{Algorithm:}
    {\tt \small
      \begin{itemize}
      \item[] compute\_optimal\_partition (node)
        \begin{itemize}
        \item[] if node.optimal\_cost not NULL, then return node.optimal\_cost
        \item[] node.optimal\_link $\leftarrow$ NULL
        \item[] node.optimal\_cost $\leftarrow$ node.cost
        \item[] for each link in node.links do
          \begin{itemize}
          \item[] cost $\leftarrow$ 0
          \item[] for each child in link.children do
            \begin{itemize}
            \item[] cost $\leftarrow$ cost + compute\_optimal\_partition (child)
            \end{itemize}
          \item[] if cost < node.optimal\_cost, then 
            \begin{itemize}
            \item[] node.optimal\_link $\leftarrow$ link
            \item[] node.optimal\_cost $\leftarrow$ cost
            \end{itemize}
          \end{itemize}
        \item[] return node.optimal\_cost
        \end{itemize}
      \end{itemize}
    }
    \label{algo:specific_algo}
  \end{algorithm}
\end{figure}

\paragraph{Complexity of the resulting optimisation algorithm}

The space complexity of the resulting optimisation algorithm is given in Table~\ref{tab:data_structure} by the number of nodes $|\hatcalP(\bVxVxT)|$ and by the number of links $|\hatfrakC(\bVxVxT)|$ that are encoded in the tridimensional poset structure.
It is exponential in the worst case, that is when all vertex subsets are feasible, polynomial of order~7 in the case of a set of vertex intervals, and polynomial of order~5 in the case of a vertex hierarchy.

The unidimensional poset structures, encoding the set of feasible vertex subsets and the set of time intervals, are considered as inputs of the optimisation problem and their building cost is hence not taken into account in the algorithm's complexity, although it is quite cheap and straightforward in the case of sets of intervals and hierarchies.
Building the corresponding tridimensional poset structure requires as many operations as there are nodes and in-coming links.
Filling it with the values of the cost function requires as many operations as there are in-coming links, thanks to the recursive decomposition of costs.
Hence, the time complexity to build the overall data structure is equivalent to its space complexity.

Finally, when applied to the maximal element, that is to the multiedge set $\bVxVxT$, Algorithm~2 is then recursively applied once to all feasible multiedge subsets $\VxVxT \in \hatcalP(\bVxVxT)$.
The bottleneck is then the summation of costs that are retrieved by the recursive calls, for each multiedge subset of each covered partition.
Hence, here again, there are as many such operations as they are in-coming links in the tridimensional poset structure, so the overall time complexity of the approach is equivalent to the space complexity of the data structure we presented.


\section{Conclusion}
\label{sec:conclusion}

This article presents a formal framework for the compression of temporal graphs.
By summarising homogeneous parts of the graph and replacing them with more general structural patterns, compression allows to reduce its description length while preserving its information content.
This framework first builds on a simple and limited combinatorial problem, that we call the \emph{Lossless Graph Compression Problem}, which exploits the (most classical) structural equivalence relation between vertices for the exact compression of simple graphs.
Among the proposed generalisations to address the more complex \emph{Lossy Multistream Compression Problem}, dealing with the approximated compression of temporal multigraphs, three main contributions are worth mentioning:
\begin{itemize}
\item The definition of an information-theoretic measure, relying on a proper formalisation of a multigraph stochastic model, to quantify and to control the information that is lost during compression, while also taking into account additional information that might be reinjected during the decompression step;
\item The enhancement of the  solution space of the initial problem (i)~by defining a less constrained partitioning of the graph working at the multiedge level instead of the vertex level, and (ii)~by allowing to express and to preserve additional vertex structures during compression;
\item The generalisation from static graphs to temporal graphs by exploiting the \emph{link stream} representation, which consists in the extension of the set of multiedges by a third dimension representing the temporal evolution of these edges, thus allowing the natural extension of all notions that have been previously introduced.
\end{itemize}

Building on a previous algorithmic framework to solve special versions of the \emph{Set Partitioning Problem}, an exact algorithm is finally introduced for the Lossy MSCP.
While it is exponential in the worst case, it is showed to be polynomial when particular vertex structures are assumed.
Yet, in order to achieve the compression of large-scale temporal graphs, future research would need to work on heuristics for the approximate solving of the Lossy MSCP.
This would require the definition of adequate operators on multiedge partitions to browse the solution space, taking into account its particular algebraic structure to efficiently evaluate slight modifications of the considered partitions, and to thus proceed to a greedy search for a local optima.
Another improvement, regarding the current implementation of the optimisation algorithm, would consists in the acknowledgement that most link streams that are considered in empirical research are quite sparse, meaning that the support of the edge function is quite small.
Hence, the data structures proposed in this article would benefit from a sparse representation of the data to decrease computation resources in real-case applications of this framework.


\section*{Acknowledgement}

The author sincerely thanks
Yves Demazeau,
Damien Dosimont,
Matthieu Latapy,
L\'eonard Panichi,
Hindol Rakshit,
Fabrice Rossi,
Mridul Seth,
Catherine Matias,
Lucas Mello Schnorr,
Lionel Tabourier,
Fabien Tarissan,
Tiphaine Viard,
and Jean-Marc Vincent,
for their valuable feedback.
He also gratefully thanks all the members of Prof.\ Grasset's team for their warm welcome within their premises.
This work has been partially funded by the European Commission H2020 FETPROACT 2016-2017 program under grant 732942 (ODYCCEUS) and by the French National Agency of Research (ANR) under grant ANR-15-CE38-0001 (AlgoDiv).


\section*{References}
\bibliographystyle{elsarticle-num} 
\bibliography{ms}

\clearpage

\appendix
\section{Solving the Set Partitioning Problem}
\label{sec:spp}

This appendix provides a detailed description of the algebraic properties of the \emph{Set Partitioning Problem} (SPP), leading to a principle of optimality that we exploit to derive the algorithm presented in Subsection~\ref{ssec:reduction} of this article.
This part of our work has been previously published in the proceedings of the 2014 IEEE International Conference on Tools with Artificial Intelligence (ICTAI'14)~\cite{RLP_ICTAI14} and detailed in a research report of the Max Planck Institute for Mathematics in the Sciences~\cite{RLP_MPI14}.

\paragraph{Algebraic Structure of the Solution Space}

Given a set of objects $\pop = \{x_1,\ldots,x_n\}$ and a set of feasible subsets $\hatcalP(\pop) \subseteq \calP(\pop)$, the resulting set of feasible partitions $\hatfrakP(\pop) \subseteq \frakP(\pop)$ is structured by an essential algebraic relation, usually referred to as the \emph{refinement relation} $\subseteq$~\cite{Davey02}.

\begin{hidendefinition}[Refinement Relation]~\\
  Partition $\calX \in \hatfrakP(\pop)$ \emph{refines} partition~$\calY \in \hatfrakP(\pop)$, and we mark $\calX \subseteq \calY$,
  if and only if each subset in $\calX$ is a subset of a subset in $\calY$:
  $$\calX \subseteq \calY \quad \Leftrightarrow \quad \forall X \in \calX, \quad \exists Y \in \calY, \quad X \subseteq Y$$
  Given a partition $\calX \in \hatfrakP(\pop)$, we define $\hatfrakR(\calX) \subseteq \hatfrakP(\pop)$ as the \emph{set of feasible partitions refining~$\calX$}:
  $$\hatfrakR(\calX) \; = \; \{\calY \in \hatfrakP(\pop) : \calY \subseteq \calX\}$$
\end{hidendefinition}
As this binary relation is reflexive, antisymmetric, and transitive, it defines a partial order on the partition set $\hatfrakP(\pop)$
that consequently forms a poset and can be represented as a Hasse diagram~\cite{Davey02}. 
  In particular, if the \emph{minimal partition} $\calX_\bot = \{\{x_1\},\ldots,\{x_n\}\}$ is feasible, then it refines all feasible partitions:
  $$\forall x \in \pop, \; \{x\} \in \hatcalP(\pop) \qquad \Rightarrow \qquad \forall \calX \in \hatfrakP(\pop), \quad \calX_\bot \in \hatfrakR(\calX)$$
  and if the \emph{maximal partition} $\calX_\top = \{\{x_1,\ldots,x_n\}\}$ is feasible, then it is refined by all feasible partitions:
  $$\pop \in \hatcalP(\pop) \qquad \Rightarrow \qquad \hatfrakR(\calX_\top) = \hatfrakP(\pop)$$

The \emph{covering relation} $\sqsubset$ is the transitive reduction of the refinement relation, that is the binary relation which holds between immediate ``neighbours'' with respect to $\subseteq$.

\begin{hidendefinition}[Covering relation]~\\
  Partition $\calX \in \hatfrakP(\pop)$ is \emph{covered} by partition $\calY \in \hatfrakP(\pop)$, and we mark $\calX \sqsubset \calY$,
  if and only if $\calX$ and $\calY$ are different, $\calX$ refines $\calY$, and there is no other feasible partition ``in-between'':
  $$\calX \sqsubset \calY \quad \Leftrightarrow \quad \calX \subsetneq \calY \quad {\rm and} \quad \nexists \calZ \in \hatfrakP(\pop), \;\; \calX \subsetneq \calZ \subsetneq \calY$$
  Given a partition $\calX \in \hatfrakP(\pop)$, we define $\hatfrakC(\calX)$ as the \emph{set of feasible partitions covered by~$\calX$}:
  $$\hatfrakC(\calX) \; = \; \{\calY \in \hatfrakP(\pop) : \calY \sqsubset \calX\}$$
\end{hidendefinition}
As it is shown in what follows, these two relations give essential algebraic tools to cleverly search for optimal partitions in $\hatfrakP(\pop)$.

\paragraph{Principle of Optimality}

In dynamic programming, finding a principle of optimality consists in showing that the solution space has an optimal substructure,
that is that the solution to the optimisation problem can be obtained by recursively combining locally-optimal solutions to several subproblems.
Intuitively, in the case of the SPP, one can rely on the fact that \emph{the union of optimal partitions of subsets forms an interesting candidate partition of the union of these subsets}~\cite{RLP_MPI14}.
Hence, by appropriately decomposing the initial set into subsets, one might provide a computationally efficient procedure to recursively build such an optimal solution (see Figure~\ref{fig:algo_decomposition_ordre} for an illustration of the following principle).

\begin{theorem}[Principle of Optimality~\cite{RLP_MPI14}]~\\
  Let $\pop$ be a set of objects, $\hatcalP(\pop)$ a set of feasible subsets, $\hatfrakP(\pop)$ the resulting set of feasible partitions, and $c : \hatcalP(\pop) \rightarrow \mathbb{R}^+$ a cost function defining partition optimality.
  For any feasible partition $\calX \in \hatfrakP(\pop)$, the union of locally-optimal feasible partitions of the subsets in $\calX$ is optimal among the feasible refinements of $\calX$:
  \begin{equation}
    \label{eq:optprinciple}
    \forall X \in \calX, \; \calX_X^* \in \hatfrakP^*(X) \quad \Rightarrow \quad \left( \bigcup_{X \in \calX} \calX_X^* \right) \in \hatfrakR^*(\calX) \text{,}
  \end{equation}
  where $\hatfrakP^*(X)$ is the set of optimal feasible partitions of $X$ and $\hatfrakR^*(\calX)$ is the set of optimal feasible partitions refining $\calX$.
\end{theorem}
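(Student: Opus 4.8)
The plan is to reduce the optimality claim to an independent minimisation over the blocks of $\calX$, exploiting the additivity of the objective together with a bijection between the feasible refinements of $\calX$ and tuples of feasible partitions of its individual blocks.

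First I would prove the structural lemma: for $\calX \in \hatfrakP(\pop)$ and any $\calY \in \hatfrakP(\pop)$, one has $\calY \subseteq \calX$ if and only if $\calY$ splits as the disjoint union $\calY = \bigcup_{X \in \calX} \calY_X$ with $\calY_X := \{Y \in \calY : Y \subseteq X\} \in \hatfrakP(X)$ for each $X$. The content lies in the forward direction and rests on $\calX$ being a \emph{partition}: for $Y \in \calY$ the block of $\calX$ containing $Y$ is unique (the blocks of $\calX$ are pairwise disjoint), so the assignment $Y \mapsto \calY_X$ is well defined; and any $x \in X$ lies in a unique $Y \in \calY$, whose containing $\calX$-block must equal $X$, again by disjointness, so $\calY_X$ covers $X$. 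This yields a bijection $\hatfrakR(\calX) \cong \prod_{X \in \calX} \hatfrakP(X)$ sending $\calY$ to $(\calY_X)_{X \in \calX}$.

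Next I would write the objective in its additive form $c(\calY) := \sum_{Y \in \calY} c(Y)$ and regroup the blocks according to this decomposition, obtaining $c(\calY) = \sum_{X \in \calX} c(\calY_X)$. Since the summands depend on disjoint coordinates of the product, the minimisation separates: $\min_{\calY \in \hatfrakR(\calX)} c(\calY) = \sum_{X \in \calX} \min_{\calY_X \in \hatfrakP(X)} c(\calY_X) = \sum_{X \in \calX} c(\calX_X^*)$, the last step by optimality of each $\calX_X^*$. The union $\bigcup_{X \in \calX} \calX_X^*$ is exactly the refinement corresponding to the tuple $(\calX_X^*)_X$ under the bijection, hence a feasible refinement of $\calX$ realising this minimal value, so $\bigcup_{X \in \calX} \calX_X^* \in \hatfrakR^*(\calX)$.

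The principal obstacle is the structural lemma rather than the arithmetic: the disjointness of the blocks of $\calX$ must be used carefully to make the decomposition $\calY = \bigcup_{X \in \calX} \calY_X$ both well defined and exhaustive — were $\calX$ merely a covering, a block of $\calY$ could lie inside several blocks of $\calX$ and the argument would collapse. A secondary point is existence of the local optima: each block $X$ is itself a feasible subset, so $\{X\} \in \hatfrakP(X)$ shows $\hatfrakP(X) \neq \emptyset$, and the minimum defining $\calX_X^*$ is taken over the finite set of feasible partitions of $X$.
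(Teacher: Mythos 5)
Your proof is correct and follows essentially the same route as the paper's: both rest on the additivity of the cost function and on the fact that every feasible refinement of $\calX$ decomposes blockwise into feasible partitions of the subsets $X \in \calX$, so that the comparison reduces to a termwise (per-block) minimisation. The only difference is one of emphasis: you isolate and prove the decomposition $\hatfrakR(\calX) \cong \prod_{X \in \calX} \hatfrakP(X)$ as an explicit structural lemma, whereas the paper assumes it implicitly when it writes an arbitrary competing refinement as $\calX^\pm = \bigcup_{X \in \calX} \calX^\pm_X$ with $\calX^\pm_X \in \hatfrakP(X)$.
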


\begin{proof}
Let $\calX \in \hatfrakP(\pop)$ be a feasible partition of $\pop$ and, for all $X \in \calX$, let $\calX^*_{X} \in \hatfrakP^*(X)$ be a locally-optimal feasible partition of $X$, meaning that
\begin{equation}
  \label{eq:a}
  \forall \calX^\pm_{X} \in \hatfrakP(X), \quad c(\calX^*_{X}) \; = \; \sum_{X^* \in \calX^*_{X}} c(X^*) \; \leq \; \sum_{X^\pm \in \calX^\pm_{X}} c(X^\pm) \; = \; c(\calX^\pm_{X}) \text{.} \tag{$\alpha$}
\end{equation}
In the following, we mark $\calX^* = \displaystyle\bigcup_{X \in \calX} {\calX^*_{X}}$.
Here is an example of such setting:

\begin{center}
  \begin{tikzpicture}
    \node at (0,-0.25) {$\pop$};
    \draw [thick] (-2,0) rectangle (2,-0.5);

    \node [anchor=east] at (-2.25,-1) {$\calX \; = $};
    \node at (-1.75,-1) {$X_1$};
    \node at (-0.5,-1) {$X_2$};
    \node at (1.25,-1) {$X_3$};
    \node [anchor=west] at (2.25,-1) {$\in \; \hatfrakP(\pop)$};
    \draw [densely dotted, thick] (-1.5,-0.75) -- (-1.5,-1.25);
    \draw [densely dotted, thick] (0.5,-0.75) -- (0.5,-1.25);
    \draw [thick] (-2,-0.75) rectangle (2,-1.25);

    \node [anchor=east] at (-2.25,-1.75) {$\calX^* \; = $};
    \node [inner sep=1pt, font=\footnotesize] (y1) at (-1.75,-3) {$\calX^*_{X_1}$};
    \node [inner sep=1pt, font=\footnotesize] (z1) at (-0.75,-3) {$\; \in \; \hatfrakP^*(X_1)$};
    \node [inner sep=1pt, font=\footnotesize] (y2) at (-0.75,-2.5) {$\calX^*_{X_2}$};
    \node [inner sep=1pt, font=\footnotesize] (z2) at (0.25,-2.5) {$\; \in \; \hatfrakP^*(X_2)$};
    \node [inner sep=1pt, font=\footnotesize] (y3) at (1.25,-3) {$\calX^*_{X_3}$};
    \node [inner sep=1pt, font=\footnotesize] (z3) at (2.25,-3) {$\; \in \; \hatfrakP^*(X_3)$};
    \node [anchor=west] at (2.25,-1.75) {$\in \; \hatfrakR^*(\calX)$};

    \draw [->, thick] (y1.north) -- (-1.75,-2.05);
    \draw [->, thick] (y2.north) -- (-0.75,-2.05);
    \draw [->, thick] (y3.north) -- (1.25,-2.05);

    \draw [thick] (-2,-1.5) rectangle (-1.525,-2);

    \draw [densely dotted, thick] (-1,-1.5) -- (-1,-2);
    \draw [densely dotted, thick] (-0.5,-1.5) -- (-0.5,-2);
    \draw [thick] (-1.475,-1.5) rectangle (0.475,-2);

    \draw [densely dotted, thick] (1.5,-1.5) -- (1.5,-2);
    \draw [thick] (0.525,-1.5) rectangle (2,-2);
  \end{tikzpicture}
\end{center}
First, since any subset $X^* \in \calX^*_X$ is a feasible subset of $X \in \calX$, then $\calX^*$ is a feasible refinement of $\calX$: $\calX^* \in \hatfrakR(\calX)$.
Second, let $\calX^\pm = \displaystyle\bigcup_{X \in \calX} {\calX^\pm_{X}}$ with $\calX^\pm_X \in \hatfrakP(X)$ be another feasible refinement of $\calX$.
We then have, applying Equation~$\alpha$:
\begin{equation}
  \label{eq:b}
  c(\calX^*) \; = \; \sum_{X \in \calX} c(\calX^*_X) \; \leq \; \sum_{X \in \calX} c(\calX^\pm_X) \; = \; c(\calX^\pm) \text{.} \tag{$\beta$}
\end{equation}
Therefore, $\calX^*$ is optimal among the feasible refinements of $\calX$: $\calX^* \in \hatfrakR^*(\calX)$.
\end{proof}

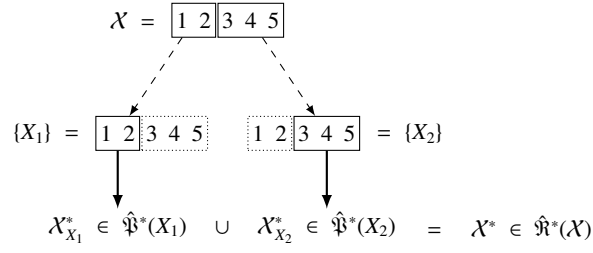
\begin{figure}[t!]
  \centering
  \begin{tikzpicture}[scale=0.5]

  \tikzboldtrue
  \coordinate (pos) at (0,2);
  \tikzpart{12-345+}{p1}
  \node [font=\LARGE, align=right] at ($(p1p1) - (1.8,0)$) {$\{X_1\} \; = \;$};
  \node [font=\LARGE, align=center] at ($(p1p1) - (0,2.5)$) {$\calX^*_{X_1} \; \in \; \hatfrakP^*(X_1)$};

  \coordinate (pos) at (4,2);
  \tikzpart{12+345}{p2}
  \node [font=\LARGE, align=left] at ($(p2p3) + (2.1,0)$) {$\; = \; \{X_2\}$};
  \node [font=\LARGE, align=center] at ($(p2p3) - (0,2.5)$) {$\calX^*_{X_2} \; \in \; \hatfrakP^*(X_2)$};

  \node [font=\LARGE, align=center] at ($0.5*(p1p1) + 0.5*(p2p3) - (0,2.5)$) {$\cup$};

  \node [font=\LARGE, align=left] at ($(p2p3) - (-4.5,2.5)$) {$\quad = \quad \calX^* \; \in \; \hatfrakR^*(\calX)$};

  \coordinate (pos) at (2,5);
  \tikzpart{12-345}{p12}
  \node [font=\LARGE, align=right] at ($(pos) - (2.5,0)$) {$\calX \; =$};

  \draw[toPart] (p12p1) -- (p1p1);
  \draw[toPart] (p12p3) -- (p2p3);

  \draw[toPartition] (p1p1) -- ++(0,-2);
  \draw[toPartition] (p2p3) -- ++(0,-2);

\end{tikzpicture}

  \caption{Recursively solving the SPP by applying the principle of optimality on a partition $\calX$ to find an optimal partition among its refinements (see Equation~\ref{eq:optprinciple})}
  \label{fig:algo_decomposition_ordre}
\end{figure}

\paragraph{Branching the Solution Space}
\label{subsec:decomposition}

Given a feasible subset $X \in \hatcalP(\pop)$ for which one wants to compute a locally-optimal feasible partition $\calX^* \in \hatfrakP^*(X)$, a \emph{branching} first consists in building subspaces $\hatfrakP_i(X) \subseteq \hatfrakP(X)$ that cover the solution space:
$$\hatfrakP_1(X) \; \cup \; \ldots \; \cup \; \hatfrakP_k(X) \quad = \quad \hatfrakP(X) \text{.}$$
Then, once one has found locally-optimal partitions within these subspaces $\calX_1^* \in \hatfrakP_1^*(X), \ldots, \calX_k^* \in \hatfrakP_k^*(X)$, one can easily solve the global optimisation problem by simply choosing among these local solutions one that minimises the objective:
\begin{equation}
  \label{eq:dec_generic}
  \argmin_{\calX \, \in \, \{\calX_1^*,\ldots,\calX_k^*\}} {c(\calX)} \quad \subseteq \quad \hatfrakP^*(X) \text{.}
\end{equation}

For that purpose, the covering relation provides ``elementary steps'' to branch the solution space, each branch corresponding to a direction to go down in the partition poset.
First, assuming that the maximal partition $\{X\}$ is feasible\footnote{
  If this is not the case for the initial set $\pop$, the following approach can easily be generalised by sequentially applying the algorithm to all maximal partitions, that is maximal elements in the poset of partitions induced by the refinement relation.
},
we know that all feasible partitions of $X$ refine the maximal partition $\{X\}$:
$$\hatfrakP(X) \quad = \quad \hatfrakR(\{X\}) \text{.}$$
Second, for any such feasible partition $\calX \in \hatfrakP(X)$, a refining partition of $\calX$ is either \emph{the partition $\calX$ itself}, or \emph{a partition that refines a partition covered by~$\calX$}.
Hence, the solution space can be branched the following way (see Figure~\ref{fig:algo_branching} for an illustration of such branching):
\begin{equation}
  \label{eq:raf_macro_rec}
  \hatfrakP(X) \quad = \quad \{\{X\}\} \; \cup \; \left( \bigcup_{\calX \, \in \, \hatfrakC(\{X\})} {\hatfrakR(\calX)} \right)
\end{equation}

\begin{figure}[h]
  \centering
  \begin{tikzpicture}[scale=0.5]

  \tikzboldtrue
  \coordinate (pos) at (6,5);
  \tikzpart{12345}{p1234}
  \node [font=\LARGE, align=right] at ($(pos) - (3,0)$) {$\{X\} \; =$};

  \coordinate (pos) at (0,2);
  \tikzpart{1-2345}{p1}
  \node [font=\LARGE, align=right] at ($(pos) - (3.25,0)$) {$\hatfrakC(\{X\}) \; = \; \bigg\{ \;$};
  \node [font=\LARGE] at ($(pos) + (0,0.9)$) {$\calX_1$};

  \coordinate (pos) at (4,2);
  \tikzpart{12-345}{p2}
  \node [font=\LARGE] at ($(pos) + (-0.25,0.9)$) {$\calX_2$};

  \coordinate (pos) at (8,2);
  \tikzpart{123-45}{p3}
  \node [font=\LARGE] at ($(pos) + (0.25,0.9)$) {$\calX_3$};

  \coordinate (pos) at (12,2);
  \tikzpart{1234-5}{p4}
  \node [font=\LARGE, align=left] at ($(pos) + (1.75,0)$) {$\; \bigg\}$};
  \node [font=\LARGE] at ($(pos) + (0,0.9)$) {$\calX_4$};

  \coordinate (pos) at (-4,0);
  \node [font=\LARGE, align=right] at ($(pos) + (0,0)$) {$\{\{X\}\}$};
  \node [font=\LARGE, align=center] at ($(pos) + (2,0)$) {$\cup$};

  \coordinate (pos) at (0,0);
  \tikzpart{-}{p1b}
  \node [font=\LARGE, align=center] at ($(pos) + (0,0)$) {$\hatfrakR(\calX_1)$};
  \node [font=\LARGE, align=center] at ($(pos) + (2,0)$) {$\cup$};

  \coordinate (pos) at (4,0);
  \tikzpart{-}{p2b}
  \node [font=\LARGE, align=center] at ($(pos) + (0,0)$) {$\hatfrakR(\calX_2)$};
  \node [font=\LARGE, align=center] at ($(pos) + (2,0)$) {$\cup$};

  \coordinate (pos) at (8,0);
  \tikzpart{-}{p3b}
  \node [font=\LARGE, align=center] at ($(pos) + (0,0)$) {$\hatfrakR(\calX_3)$};
  \node [font=\LARGE, align=center] at ($(pos) + (2,0)$) {$\cup$};

  \coordinate (pos) at (12,0);
  \tikzpart{-}{p4b}
  \node [font=\LARGE, align=center] at ($(pos) + (0,0)$) {$\hatfrakR(\calX_4)$};
  \node [font=\LARGE, align=left] at ($(pos) + (3,0)$) {$\quad = \quad \hatfrakP(\{X\})$};

  \draw[toPartition] (p1234) -- (p1);
  \draw[toPartition] (p1234) -- (p2);
  \draw[toPartition] (p1234) -- (p3);
  \draw[toPartition] (p1234) -- (p4);

  \draw[toPart] (p1) -- (p1b);
  \draw[toPart] (p2) -- (p2b);
  \draw[toPart] (p3) -- (p3b);
  \draw[toPart] (p4) -- (p4b);

\end{tikzpicture}
  \caption{Branching the solution space according to the refinement and the covering relations (see Equation~\ref{eq:raf_macro_rec})}
  \label{fig:algo_branching}
\end{figure}
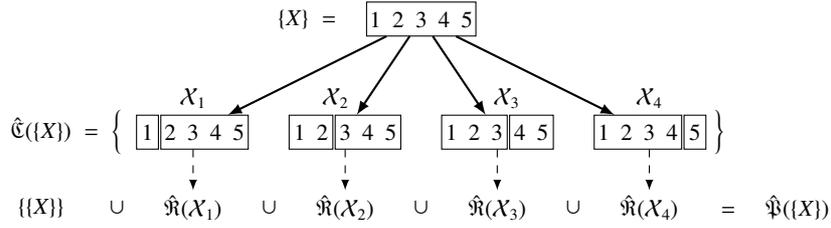

\paragraph{Recursive Algorithm}
\label{subsec:rec_algorithm}

Computing an optimal partition of $\pop$ thus consists in computing locally-optimal partitions among the ones that are refining the partitions covered by $\{\pop\}$.
Thanks to the principle of optimality, such a computation can be recursively performed by applying the same principle to the subsets of these covered partitions.
Hence, the branching and recursion equations (see Equations~\ref{eq:optprinciple}, \ref{eq:dec_generic}, and \ref{eq:raf_macro_rec})
allow to define a divide and conquer algorithm that computes a locally-optimal partition $\calX^* \in \hatfrakP^*(X)$ for any $X \in \hatcalP(\pop)$ by applying the following recursive formula:
\begin{equation}
  \label{eq:final_eq}
  \argmin_{\calX \, \in \, \{\{X\}\} \, \cup \, \left(\displaystyle\bigcup_{\calX \, \in \, \hatfrakC(\{X\})} \; \left\{ \displaystyle\bigcup_{{X^\pm} \in \calX} {\calX_{X^\pm}^*} \right\} \right)} \quad {c(\calX)} \quad \subseteq \quad \hatfrakP^*(X)
\end{equation}
where $\calX_X^*$ designates an optimal feasible partition of $X$, that is $\calX_X^* \in \hatfrakP^*(X)$.

Moreover, in the dynamic programming paradigm, recursive algorithms can be easily improved by \emph{memoization}, that is by recording the results of time-consuming recursive calls~\cite{Rothkopf98,Sandholm02}.
For each subset on which the algorithm is once applied, by keeping trace of the resulting locally-optimal partition, one can immediately return this result when posterior calls occur on the same subset.
This way, the algorithm is applied only once to each feasible subset $X \in \hatcalP(\pop)$.\\

The resulting algorithm is provided in Subsection~\ref{ssec:reduction} of this article.


\end{document}
\endinput